\documentclass[a4paper,final,conference]{IEEEtran}
\usepackage{amsthm}
\usepackage{amsmath}
\usepackage{pifont}
\usepackage[top=0.58in, bottom=0.58in, left=0.6in, right=0.6in]{geometry}

\usepackage{enumerate}
\usepackage{amsfonts}
\usepackage{graphicx}
\usepackage{latexsym}
\usepackage{amssymb}
\usepackage{stmaryrd}
\usepackage{amscd}
\usepackage[small]{caption}
\usepackage{array}
\usepackage{tikz,ifthen,calc}
\usetikzlibrary{trees,snakes,automata}
\usetikzlibrary{arrows,shapes,positioning}
\usepackage{comment}
\usepackage{multicol}
\usepackage{stackengine}

\parindent   0.15in






\newcommand{\cC}{{\cal C}}
\newcommand{\cD}{{\cal D}}

\newcommand{\cP}{{\cal P}}


\newcommand{\sG}{\script{G}}

\newcommand{\sP}{\script{P}}




\newcommand{\bfp}{{\boldsymbol p}}
\newcommand{\bfq}{{\boldsymbol q}}

\newcommand{\bfP}{{\mathbf P}}

\newcommand{\deff}{\mbox{$\stackrel{\rm def}{=}$}}

\newcommand{\ceilenv}[1]{\left\lceil #1 \right\rceil}

\newcommand{\sbinom}[2]{\left[ \begin{array}{c} #1 \\ #2 \end{array} \right] }

\DeclareMathAlphabet{\mathbfsl}{OT1}{cmr}{bx}{it}
\newcommand{\uuu}{\kern-1pt\mathbfsl{u}\kern-0.5pt}
\newcommand{\vvv}{\kern-1pt\mathbfsl{v}\kern-0.5pt}

\newcommand{\bC}{\overline{\cC}}
\newcommand{\balpha}{{\boldsymbol \alpha}}

\newcommand{\myboxplus}{\kern1pt\mbox{\small$\boxplus$}}

\makeatletter \DeclareRobustCommand{\sbinom}{\genfrac[]\z@{}}
\makeatother
\newcommand{\G}[2]{\sbinom{{#1}\kern-1pt}{{#2}\kern-1pt}}
\newcommand{\Gq}[2]{\sbinom{{#1}\kern-0.25pt}{{#2}\kern-0.25pt}}

\newcommand{\N}{\mathbb{N}}

\newcommand{\Ps}{\smash{{\sP\kern-2.0pt}_q\kern-0.5pt(n)}}
\newcommand{\sPs}{\smash{{\sP\kern-1.5pt}_q(n)}}
\newcommand{\Ptwo}{\smash{{\sP\kern-2.0pt}_2\kern-0.5pt(n)}}
\newcommand{\Ptwom}{\smash{{\sP\kern-2.0pt}_2\kern-0.5pt(m)}}
\newcommand{\Ptwonm}{\smash{{\sP\kern-2.0pt}_2\kern-0.5pt(n+m)}}
\newcommand{\Ptwoa}{\smash{{\sP\kern-2.0pt}_2\kern-0.5pt(1)}}
\newcommand{\Ptwob}{\smash{{\sP\kern-2.0pt}_2\kern-0.5pt(2)}}
\newcommand{\Ptwoc}{\smash{{\sP\kern-2.0pt}_2\kern-0.5pt(3)}}
\newcommand{\Ptwod}{\smash{{\sP\kern-2.0pt}_2\kern-0.5pt(4)}}
\newcommand{\Ptwoe}{\smash{{\sP\kern-2.0pt}_2\kern-0.5pt(5)}}
\newcommand{\Ptwof}{\smash{{\sP\kern-2.0pt}_2\kern-0.5pt(6)}}
\newcommand{\Ptwokm}{\smash{{\sP\kern-2.0pt}_2\kern-0.5pt(2k-1)}}
\newcommand{\Pone}{\smash{{\sP\kern-2.5pt}_2\kern-0.5pt(n{-}1)}}

\newcommand{\Gr}{\smash{{\sG\kern-1.5pt}_q\kern-0.5pt(n,k)}}
\newcommand{\Gi}{\smash{{\sG\kern-1.5pt}_q\kern-0.5pt(n,i)}}
\newcommand{\Gj}{\smash{{\sG\kern-1.5pt}_q\kern-0.5pt(n,j)}}
\newcommand{\Grmk}{\smash{{\sG\kern-1.5pt}_q\kern-0.5pt(n,n-k)}}
\newcommand{\Grdk}{\smash{{\sG\kern-1.5pt}_q\kern-0.5pt(2k,k)}}
\newcommand{\Grekappa}{\smash{{\sG\kern-1.5pt}_q\kern-0.5pt(n,e+1-\kappa)}}
\newcommand{\Grtwoekappa}{\smash{{\sG\kern-1.5pt}_q\kern-0.5pt(n,2e+1-\kappa)}}
\newcommand{\Gremkappa}{\smash{{\sG\kern-1.5pt}_q\kern-0.5pt(n,e-\kappa)}}
\newcommand{\Gn}{\smash{{\sG\kern-1.5pt}_2\kern-0.5pt(n,n{-}1)}}
\newcommand{\Gnq}{\smash{{\sG\kern-1.5pt}_q\kern-0.5pt(n,n{-}1)}}
\newcommand{\Gone}{\smash{{\sG\kern-1.5pt}_2\kern-0.5pt(n,1)}}
\newcommand{\Gqone}{\smash{{\sG\kern-1.5pt}_q\kern-0.5pt(n,1)}}
\newcommand{\GTwo}{\smash{{\sG\kern-1.5pt}_2\kern-0.5pt(n,k)}}
\newcommand{\GTwonk}[2]{{\smash{{\sG\kern-1.5pt}_2\kern-0.5pt({#1},{#2})}}}
\newcommand{\Gnk}{\smash{{\sG\kern-1.5pt}_2\kern-0.5pt(n,n{-}k)}}
\newcommand{\Greone}{\smash{{\sG\kern-1.5pt}_q\kern-0.5pt(n,e{+}1)}}
\newcommand{\Gretwo}{\smash{{\sG\kern-1.5pt}_q\kern-0.5pt(n,e{+}2)}}

\newcommand{\be}[1]{\begin{equation}\label{#1}}
\newcommand{\ee}{\end{equation}}

\newcommand{\Cref}[1]{Co\-rol\-la\-ry\,\ref{#1}}

\newcolumntype{L}[1]{>{\raggedright\let\newline\\\arraybackslash\hspace{-4pt}}p{#1}}
\newcolumntype{C}[1]{>{\let\newline\\\arraybackslash}p{#1}}

\newcommand{\tred}[1]{\textcolor{red}{#1}}
\newcommand{\tb}[1]{\textcolor{blue}{#1}}

\newtheorem{theorem}{Theorem}
\newtheorem{lemma}[theorem]{Lemma}
\newtheorem{remark}[theorem]{Remark}
\newtheorem{corollary}[theorem]{Corollary}

\newtheorem{problem}{Problem}
\newtheorem{problem*}{Problem}

\providecommand{\customgenericname}{}
\newcommand{\newcustomtheorem}[2]{%
	\newenvironment{#1}[1]
	{%
		\renewcommand\customgenericname{#2}%
		\renewcommand\theinnercustomgeneric{##1}%
		\innercustomgeneric
	}
	{\endinnercustomgeneric}
}
\newcustomtheorem{customProblem}{Problem}
\newcustomtheorem{customlemma}{Lemma}
\newcustomtheorem{customthm}{Theorem}
\newcustomtheorem{customCor}{Corollary}

\begin{document}
	
	\title{\textbf{Iterative Programming of Noisy Memory Cells}
		\hspace{-0.5ex}
		}

\author{
\textbf{Michal Horovitz}\IEEEauthorrefmark{1},
\textbf{Eitan Yaakobi}\IEEEauthorrefmark{2},
\textbf{Eyal En Gad}\IEEEauthorrefmark{3},
and \textbf{Jehoshua Bruck}\IEEEauthorrefmark{4}\\[2mm]		 

\IEEEauthorblockA{\IEEEauthorrefmark{1} 
Dept. of Computer Science, Tel-Hai College,
and The Galilee Research Institute - Migal,
Upper Galilee, Israel
\\[1mm]}

\IEEEauthorblockA{\IEEEauthorrefmark{2}
Dept. of Computer Scienc, Technion---Israel Inst. of Tech.,
Haifa 32000, Israel  \\[1mm]}

\IEEEauthorblockA{\IEEEauthorrefmark{3} 
Micron Technology,
Milipitas, CA 95035, USA 
\\[1mm]}

\IEEEauthorblockA{\IEEEauthorrefmark{4} 
Dept. of Electrical Engineering,
California Inst. of Tech.,
Pasadena, CA 91125, USA
\\[1mm]}

{
Emails:\, 

\bfseries horovitzmic@telhai.ac.il,\, 
	 yaakobi@cs.technion.ac.il,\,  
	 eengad@micron.com,\,  
	 bruck@caltech.edu
}
\vspace{-4ex}} 
\vspace{-4ex}

	\maketitle
\begin{abstract}
In this paper, we study a model, which was first presented by Bunte and Lapidoth, that mimics the programming operation of memory cells. Under this paradigm we assume that cells are programmed sequentially and individually. The programming process is modeled as transmission over a channel, while it is possible to read the cell state in order to determine its programming success, and in case of programming failure, 
		to reprogram the cell again. Reprogramming a cell can reduce the bit error 
		rate, however this comes with the price of increasing the overall programming 
		time and thereby affecting the writing speed of the memory. An \emph{iterative programming 
			scheme} is an algorithm which specifies the number of attempts to program 
		each cell. Given the programming channel and constraints on the average and maximum number of attempts to program a 
		cell, we study programming schemes which maximize the number of bits that can be 
		reliably stored in the memory.  
		We extend the results by Bunte and Lapidoth and
		study this problem when the programming channel is either 
		the BSC, BEC, or $Z$ channel. For the BSC and the BEC our analysis is also extended for the case where the error probabilities on consecutive writes are not necessarily the same.
		Lastly, we also study a related model which is motivated by the synthesis process of DNA molecules.
	\end{abstract}\section{Introduction}\label{sec:intro}
\renewcommand{\baselinestretch}{0.98}\normalsize\noindent

	Many of existing and future volatile and non-volatile memories consist of memory cells. This includes for example DRAM, SRAM, phase-change memories (PCM), STT-MRAM, flash memories, as well as strands of DNA molecules. The information in these memories is stored in cells that can store one or multiple bits. The state of each cell can be changed in several methods which depend upon the memory technology, such as changing its resistance or voltage level. The process of changing the cell state, which we call here \emph{programming}, is crucial in the design of these memories as it determines the memory's characteristics such as speed, reliability, endurance, and more. Hence, optimizing the programming process has become an important feature in the development of these memories.

	
	Two of the more important goals when programming memory cells are speed and 
	reliability. In this work we aim to understand the relation between these two 
	figure of merits. Namely, we consider a model in which the cells are programmed 
	sequentially, one after the other \cite{BL10,BL11}. 
	Assume $n$ binary cells are programmed. The cell programming process is modeled 
	as transmission over some \emph{discrete memoryless channel} (\emph{DMC}) $C$, for example 
	the \emph{binary symmetric channel} (\emph{BSC}), the \emph{binary erasure channel} (\emph{BEC}), or the \emph{$Z$ channel}.
	It is assumed
	that when a cell is programmed
	we can check the success of its programming operation and in case of failure we 
	may choose to program it again. If there is no time restriction for programming 
	the cells, an optimal solution is to program each cell until it reaches its 
	correct value. For example, if the programming operation is modeled as the BSC 
	with crossover probability $p$, then the expected number of programming attempts until reaching success is $1/(1-p)$. 
	If $p=0.1$, this increases the programming operation time by roughly $11\%$. However, if the system allows to 
	increase the programming time by only $5\%$, then we search for a different 
	strategy. 
	
	More formally, we assume that there are~$n$ cells, for~$n$ sufficiently large, which are programmed according to some \emph{iterative programming scheme} $PS$. We define the \emph{average delay} of the programming scheme $PS$ over channel $C$ as the ratio between the expected number of programming attempts and the number of cells,~$n$, and the  \emph{maximum delay} is the maximum number of attempts to program a cell. Given some constraints, $D$ and $T$, on the average and maximum delay, respectively, our goal in this paper is to find a programming scheme that will maximize the number of information bits that can be reliably stored into these $n$ cells. Intuitively, the question is whether to spend time ensuring the cells are programmed correctly, or spend that resource for programming more redundancy cells in order to correct the errors. We present this problem as a solution to  program memory cells, however this is also a valid model for transmission processes where there is noise-free feedback on the transmission success. That is, we consider the problem of transmitting bits, or more generally packets, over a channel with feedback. Then, in case of transmission error, the goal is to determine an optimal strategy which specifies whether to retransmit the bit again. 
	
	Previous works considered programming schemes mostly for flash memory cells. 
	In~\cite{JiBrISITA08}, an optimal programming algorithm was presented to maximize the number of bits that can be stored in a single cell, which achieves the zero-error storage capacity under a noisy model. In~\cite{JiLiPACRIM09}, an algorithm was shown for optimizing the expected cell programming precision, when the programming noise follows a random distribution. In~\cite{YJSVW10}, algorithms for parallel programming of flash memory cells were studied which were then extended in~\cite{QYS14} as well as for the rank modulation scheme in~\cite{QJS13}.
	Other works studied the programming schemes with continuous-alphabet channels, see \cite{FLMS09,LFM10,LFMS08,LFMS14,LMF10,MFLES09,MLSF10,VTLF14} and references therein.
	
	Our point of departure in this paper is the programming model which was first presented in~\cite{BL10,BL11} by Bunte and Lapidoth for discrete alphabet memory  channels (DMC). In particular, in \cite{BL10} the case of symmetric channels with focus on the BSC was studied. We extend the results from \cite{BL10} for the BSC and study the problem for the BEC and the $Z$ channel, which the last is applicable in particular for flash memories. Furthermore, we also study the case when the error probabilities on consecutive programming operations are not the same. Even though we follow the model from~\cite{BL10}, we note that we propose a slightly different formulation to the problem and model, which we found to be more suitable to the cases we solve in this paper.
	
	Yet another model studied in this work is motivated by DNA-based storage systems. Recently, DNA has been explored as a possible near-future archival storage solution thanks to its potential high capacity and endurance~\cite{Betal16,B16,EZ17,Getal13,YGM16}. DNA synthesis is the process of artificially creating DNA molecules such that arbitrary single stranded DNA sequences of length few hundreds bases can be generated chemically. When synthesizing DNA strands, the bases are added one after the other to form the long sequence; for more details see~\cite{KC14}. However, this process is not prone to errors and several errors might occur in the form of insertions, deletions, and substitutions. Since the bases are added in a sequential manner it is possible to check the success of each step and thereby to correct failures or repeat the attachment of the bases. In particular, in case the attachment of a specific base does not succeed on several consecutive iterations, it is possible to add another different base which indicates a synthesis failure in this location.

	
	
	The rest of the paper is organized as follows. In Section~\ref{sec:def}, we formally present the definitions for the programming model and the problem studied in the paper. In Section~\ref{sec:mainProb}, we solve the programming model for the BSC and the BEC, and in Section~\ref{sec:Z-channel} we study the $Z$ channel. In Section~\ref{sec:generalProb1}, we generalize this problem for the setup where consecutive programmings of a cell do not necessarily behave the same with respect to the error probability. A new model motivated by DNA, which combines BSC and BEC is studied in Section~\ref{sec:psAsBSCandBEC}. Finally, Section~\ref{sec:conc} concludes the paper.

\section{Definitions and Basic Properties}\label{sec:def}
In this section we formally define the cell programming model and state the 
main problems studied in the paper. We also present some basic properties 
that will be useful in the rest of the paper.

Let $C$ be a discrete memoryless channel (DMC). We model the process of 
programming a cell as transmission over a channel $C$, with the distinction 
that after every programming attempt, it is possible to check the cell state 
and to decide, in the case of an error, whether to leave the cell erroneous, or 
reprogram it again. We assume that there are $n$ cells which are programmed 
individually. 
An \emph{iterative programming scheme}, or in abbreviation \emph{programing scheme}, is an algorithm which states the rules to program 
the $n$ cells.
Its \emph{average delay} over channel $C$
is defined to be the ratio between the expected 
number of programming attempts and the number of cells,
and the \emph{maximum delay} is the maximal number of attempts to program a cell.
Our primarily goal in 
this work is to reliably store a large number of bits into the cells, while 
constraining the average and the maximum delay.



We define a natural class of programming schemes which are denoted by 
$PS_t$, for $t\ge 0$, and $PS_{\infty}$. For $t\geq 0$, the strategy of the 
programming scheme $PS_t$ is to program the cell until its programming succeeds 
or the number of attempts is $t$, that is, after the $t$-th attempt the success 
is not verified and the cell may be left programmed errorneously. Applying $PS_0$ 
means 
that the cell is not programmed, while the programming scheme $PS_{\infty}$ is 
the one where the cell is programmed until it stores the correct value. For 
notational purposes in the paper, we denote the programming scheme $PS_{\infty}$ by 
$PS_{-1}$. 

For asymmetric channels the average delay of $PS_t$ may depend also on the code, for example in the $Z$ channel the average delay depends on the number of zeros in the codewords. Thus, from here on and until the end of this section we refer only for symmetric channels.
These concepts will be defined similarly in Section~\ref{sec:Z-channel} for the $Z$ channel.

For $t\geq -1$ and a symmetric channel, $C$, we denote by $D_t(C)$ the average delay of the programming scheme $PS_t$ when the programming process is modeled by the channel~$C$. For example (see Lemma~\ref{lem:errorProbP_D_R}), if the channel is the \emph{binary symmetric channel} (\emph{BSC}) with crossover probability $p$, 
or the \emph{binary erasure channel} (\emph{BEC}) with erasure probability~$p$, which will be denoted by $BSC(p)$ and $BEC(p)$, respectively, then $D_t(C(p)) = \sum_{i=0}^{t-1}p^i = 
\frac{1-p^t}{1-p}$ for $t\geq 0$ and $D_{-1}(C(p)) = 1/(1-p)$ (see Lemma~\ref{lem:errorProbP_D_R}),
where $C(p)$ is $BSC(p)$ or $BEC(p)$.
Unless stated otherwise, for the BEC we assume that $0\le p \le 1$ and for the BSC, $0\le p \le 0.5$.

When a cell is programmed according to a programming scheme $PS_t$, 
we can model this process as a transmission over $t$ copies of the channel $C$ 
and there is an error if and only if there is an error in each of the $t$ 
channels. 
We denote this as a new channel $C_{t}$. 
Note that a programming scheme has no 
effect on the types of the errors, 
but it may change the probability of the 
cell to be in error. For example, if one cell is programmed using the programming scheme $PS_{t_1}$ 
while another cell is programmed by the programming scheme $PS_{t_2}$, for $t_1\ne t_2$, then the probabilities of these 
cells to be erroneous may be unequal. 
We denote the capacity of the channel $C_{t}$ 
by $\cC_t(C)$. 
For example, for $t\ge 1$ and $C=BSC(p)$, $C_{t} = BSC(p^t)$ and the capacity of the channel $C_{t}$  is $\cC_t(C)=1-h(p^t)$\footnote{In this paper $h(x)$ is the binary entropy function where $0\le x \le 1$.}.
Note that for every channel, $C$, it holds that $\cC_0(C)=0$, $\cC_{-1}(C)=1$, and $D_0(C)=0$.

In this paper we focus on programming schemes that consist of combinations of 
several schemes from
$\{PS_t\}_{t\ge -1}$. 
Formally, given some $T\geq 0$, the maximum number of attempts to program a cell, we define the following set of 
programming schemes. 
\begin{align}\label{eq:programming schemes_T}
\cP_T\hspace{-0.5ex}=\hspace{-0.5ex}\bigg\{ & PS\left((\beta_1, t_1),(\beta_2, t_2),\ldots, 
(\beta_{\ell},t_{\ell})\right) :\\  \nonumber 
& 0\leq t_1, \ldots, t_\ell \leq T, 0<\beta_1,\ldots,\beta_\ell\leq 1,  \sum_{i=1}^{\ell}\beta_i=1 \bigg\},
\end{align}
where $PS\left((\beta_1, t_1),(\beta_2, t_2),\ldots, 
(\beta_{\ell},t_{\ell})\right)$
is a programming scheme of $n$ cells which works as follows. 
For all $1\leq i\leq \ell$, $\beta_i n$ of the cells are programmed according 
to  the programming scheme $PS_{t_i}$\footnote{We assume here and in the rest of the paper that $n$ is sufficiently large so that $\beta_i n$ is an integer number for all $i$.}. 
The set of programming schemes $\cP_{-1}$ is defined similarly where $-1\leq t_1, \ldots, t_\ell $.
\begin{align*}\label{eq:programming schemes}
\cP_{-1}\hspace{-0.5ex}=\hspace{-0.5ex}\bigg\{ & PS\left((\beta_1, t_1),(\beta_2, t_2),\ldots, 
(\beta_{\ell},t_{\ell})\right) :\\  \nonumber
& \hspace{-0.5ex} -1\leq t_1, \ldots, t_\ell, 0<\beta_1,\ldots,\beta_\ell\leq 1,  \sum_{i=1}^{\ell}\beta_i=1 \bigg\}.
\end{align*}

For $T\geq -1$, it can be readily verified that for a programming scheme $PS = 
PS\left((\beta_1, t_1),\ldots,(\beta_{\ell},t_{\ell})\right)\in\cP_T$ over a 
symmetric channel $C$, the average delay, denoted by $D_{PS}(C)$, is given by $$D_{PS}(C) =  
\sum_{i=1}^\ell\beta_iD_{t_i}(C).$$
Similarly, the \emph{capacity} of the programming scheme $PS$ over the channel $C$ is denoted by $\cC_{PS}(C)$ and is defined to be 
$$\cC_{PS}(C) = \sum_{i=1}^\ell\beta_i \cC_{t_i}(C),$$
where, as defined above, 
$\cC_{t_i}(C)$ is the capacity of the channel $C_{t_i}$.
Note that the definition of the capacity, $\cC_{PS}(C)$,  corresponds to the set 
of all achievable rates for reliably storing information in the cells. 
Specifically, when applying the programming scheme $PS$ to program cells over the channel $C$, the following properties hold:
\begin{itemize}
	\item for every $R<\cC_{PS}(C)$, there exists a sequence of codes 
	$\mathsf{C}_n=(2^{nR}, p_e^{(n)},n)$, such that $p_e^{(n)}\to 0$ as 
	$n\to\infty$,
	\item any sequence of codes $\mathsf{C}_n=(2^{nR}, p_e^{(n)},n)$ such that 
	$p_e^{(n)}\to 0$ as $n\to\infty$, must satisfy $R<\cC_{PS}(C)$,
\end{itemize}
where $\mathsf{C}_n$ is a code of size $2^{nR}$, $n$ is the length of the 
codewords, and 
$p_e^{(n)}$ is the decoding error probability when using the code 
$\mathsf{C}_n$. 

The main problem we study in this paper is formulated in 
Problem~\ref{prob:main} for  symmetric channels. 
The motivation of this problem is to maximize the number of information bits 
that can be reliably stored in $n$ cells when $n$ is sufficiently large,
where the average delay, that is, the average number of attempts to program a cell, is at most some prescribed value $D$,
and the number of attempts to program a cell is at most $T$, i.e., the maximum delay is at most $T$. The case of $T=-1$ corresponds to having no constraint on the maximum delay.
\begin{problem}\label{prob:main}
Given a symmetric channel $C$, an average delay $D$, and a maximum delay $T$,
find a programming scheme, $PS\in \cP_T$, which maximizes the capacity $\cC_{PS}(C)$, under the constraint that $D_{PS}(C)\le D$. In particular, given $C$ ,$D$, and $T$, find the value of $$F_1(C,D,T) = \max_{PS\in \cP_T:D_{PS}(C)\le D}\{\cC_{PS}(C)\}.$$
\end{problem}

Assume we are given a symmetric channel $C$, 
an average delay~$D$, 
and a programming scheme $PS=PS\left((\beta_1, t_1),\ldots, 
(\beta_{\ell},t_{\ell})\right)\in \cP_T$,
such that $D_{PS}(C) > D$.
In order to meet the constraint of the average delay $D$ by using the programming scheme $PS$,
we program only $\frac{D}{D_{PS}(C)}$ fraction of the cells
with the programming scheme $PS$, and the remaining cells are not programmed.
Hence, we define the programming scheme 
$PS(C,D)$ as follows:
\begin{equation}\label{eq:PS(C,D)}
\hspace{-0.1cm}	PS(C,D) \hspace{-0.15cm}=\hspace{-0.15cm}
	\begin{cases}
		\hspace{-0.1cm} PS, & \hspace{-1.28cm}\mbox{if } D_{PS}(C)\le D\\ 
		\hspace{-0.1cm} 
		PS\left((1\hspace{-0.1cm}-\hspace{-0.1cm}\beta,0),(\beta\beta_1, 
		t_1),\ldots, 
		(\beta\beta_{\ell},t_{\ell})\right)\hspace{-0.1cm},  & \hspace{-0.28cm} 
		\mbox{otherwise,} 
		\end{cases}
\end{equation}
where $\beta=\frac{D}{D_{PS}(C)}$.
It can be readily verified that the properties in the next lemma hold.
\begin{lemma}\label{lem:PS-D}
Given a symmetric channel $C$, 
an average delay $D$, 
and a programming scheme $PS\in \cP_T$, 
the following properties hold
\begin{enumerate}
\item  $D_{PS(C,D)}(C)=\min\{D_{PS}(C),D\}$, and
\item  $\cC_{PS(C,D)}(C)=\min\left\{1, \frac{D}{D_{PS}(C)}\right\} \cdot \cC_{PS}(C)$.
\end{enumerate}
\end{lemma}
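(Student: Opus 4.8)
The plan is to verify both identities by direct substitution into the piecewise definition of $PS(C,D)$ given in~\eqref{eq:PS(C,D)}, splitting on whether $D_{PS}(C)\le D$ and exploiting the fact that both the average delay and the capacity of a scheme in $\cP_T$ are weighted sums over its constituent schemes, together with the normalizations $D_0(C)=0$ and $\cC_0(C)=0$ recorded earlier in the section.

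First I would dispatch the easy case $D_{PS}(C)\le D$, where $PS(C,D)=PS$ by definition: then $D_{PS(C,D)}(C)=D_{PS}(C)=\min\{D_{PS}(C),D\}$, and because $D/D_{PS}(C)\ge 1$ we have $\min\{1,D/D_{PS}(C)\}=1$, so that $\cC_{PS(C,D)}(C)=\cC_{PS}(C)$ agrees with the claimed product.

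In the remaining case $D_{PS}(C)>D$, I would set $\beta=D/D_{PS}(C)\in(0,1)$, so that $PS(C,D)=PS((1-\beta,0),(\beta\beta_1,t_1),\ldots,(\beta\beta_\ell,t_\ell))$. Expanding the delay as a weighted sum over the components gives $D_{PS(C,D)}(C)=(1-\beta)D_0(C)+\beta\sum_{i=1}^{\ell}\beta_i D_{t_i}(C)$; since $D_0(C)=0$ and the remaining sum is exactly $D_{PS}(C)$, this collapses to $\beta D_{PS}(C)=D=\min\{D_{PS}(C),D\}$. The identical computation with $\cC$ and $\cC_0(C)=0$ in place of $D$ and $D_0(C)=0$ yields $\cC_{PS(C,D)}(C)=\beta\cC_{PS}(C)=(D/D_{PS}(C))\cC_{PS}(C)$, which matches the claim because here $\min\{1,D/D_{PS}(C)\}=D/D_{PS}(C)$.

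There is no genuine obstacle: the argument is a routine substitution, and the only points deserving care are that the inserted $PS_0$ component contributes nothing to either quantity — which is precisely the content of $D_0(C)=\cC_0(C)=0$ — and that the new weights $(1-\beta),\beta\beta_1,\ldots,\beta\beta_\ell$ remain positive and sum to $1$, so that $PS(C,D)$ is indeed a legitimate member of $\cP_T$ to which the weighted-sum formulas for $D_{PS}(C)$ and $\cC_{PS}(C)$ apply.
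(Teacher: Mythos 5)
Your verification is correct and is exactly the routine substitution the paper has in mind — the paper itself omits the proof, stating only that the properties "can be readily verified," and your case split on $D_{PS}(C)\le D$ together with the observation that the inserted $PS_0$ component contributes nothing (since $D_0(C)=\cC_0(C)=0$) is the intended argument. The only point you leave implicit is the degenerate case $D_{PS}(C)=0$, where $D/D_{PS}(C)$ is undefined; the paper handles this by the separate convention stated immediately after the lemma, so nothing is missing in substance.
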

Note that for $p>0$, $D_{PS}(C)=0$ if and only if $PS=PS((1,0))$, and then we define $\cC_{PS(C,D)}(C)=\cC_{PS}(C)$ which is equal to zero by the definition of $PS_0$.

We next state another concept 
which will be helpful in solving Problem~\ref{prob:main}.
The \emph{normalized capacity} of
 a programming scheme $PS$ over a symmetric channel $C$ is defined to be 
\begin{equation}\label{eq:normalizedCapacity}
{\bC}_{PS}(C) =\begin{cases}
 \frac{\cC_{PS}(C)}{D_{PS}(C)}, & \mbox{if } D_{PS}(C)>0,\\
 \cC_{PS}(C), & \mbox{otherwise.} 
\end{cases}
\end{equation}
The normalized capacity is the ratio between 
the maximum number of information bits that can be reliably stored and 
the average number of programming attempts.

Lemma~\ref{lem:connect_prob1AndProb2_1}
presents a strong connection between 
the normalized capacity of a programming scheme $PS$ over a channel $C$
and its capacity over a channel $C$ under a constraint $D$.
\begin{lemma}\label{lem:connect_prob1AndProb2_1}
	For a symmetric channel $C$, 
	an average delay $D$, 
	and a programming scheme $PS$,
	the following holds
	$$
	\cC_{PS(C,D)}(C)=\min\{D,D_{PS}(C)\}\cdot {\bC}_{PS}(C)
	.
	$$
\end{lemma}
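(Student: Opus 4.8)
The plan is to prove this by a direct case analysis on the two cases appearing in the definition of $PS(C,D)$ in~\eqref{eq:PS(C,D)}, combining the computational content of Lemma~\ref{lem:PS-D} with the definition of the normalized capacity in~\eqref{eq:normalizedCapacity}. The statement is essentially a bookkeeping identity that repackages the two parts of Lemma~\ref{lem:PS-D} into a single product formula, so the main work is to check that the two definitions agree on their respective ranges and that the edge cases match.

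First I would dispose of the generic case where $D_{PS}(C) > 0$. By part~(2) of Lemma~\ref{lem:PS-D}, we have $\cC_{PS(C,D)}(C) = \min\{1, D/D_{PS}(C)\}\cdot \cC_{PS}(C)$. Since $D_{PS}(C) > 0$, I can write $\cC_{PS}(C) = D_{PS}(C)\cdot {\bC}_{PS}(C)$ by the first branch of~\eqref{eq:normalizedCapacity}. Substituting and pulling $D_{PS}(C)$ inside the minimum (which is legitimate because $D_{PS}(C) > 0$) gives
\begin{equation*}
\cC_{PS(C,D)}(C) = \min\{D_{PS}(C), D\}\cdot {\bC}_{PS}(C),
\end{equation*}
which is exactly the claimed identity. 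Here the key algebraic move is $\min\{1, D/D_{PS}(C)\}\cdot D_{PS}(C) = \min\{D_{PS}(C), D\}$, valid since multiplying a minimum by a positive scalar distributes over it.

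The remaining case is $D_{PS}(C) = 0$, which is the edge case I expect to be the only subtle point. By the remark following Lemma~\ref{lem:PS-D}, for $p>0$ this forces $PS = PS((1,0))$, and then $\cC_{PS(C,D)}(C) = \cC_{PS}(C) = 0$. On the other side of the claimed equation, $\min\{D, D_{PS}(C)\} = \min\{D,0\} = 0$ (using $D \ge 0$), so the right-hand side is $0\cdot {\bC}_{PS}(C) = 0$ as well, using the second branch of~\eqref{eq:normalizedCapacity} for ${\bC}_{PS}(C)$. Thus both sides vanish and the identity holds.

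The main obstacle, if any, is making sure the degenerate case $D_{PS}(C)=0$ is handled consistently with how the paper has defined $\cC_{PS(C,D)}(C)$ there (via the explicit convention in the remark), rather than through part~(2) of Lemma~\ref{lem:PS-D}, whose formula involves division by $D_{PS}(C)$ and is therefore undefined. Once that convention is invoked, the verification is immediate since both sides are zero. Everything else is the one-line distribution of a positive scalar through a minimum, so I do not anticipate any real difficulty beyond careful case separation.
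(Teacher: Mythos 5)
Your proof is correct and follows essentially the same route as the paper's: both arguments reduce the claim to Lemma~\ref{lem:PS-D} together with the definition of the normalized capacity in~\eqref{eq:normalizedCapacity}, with your ``distribute the positive scalar through the minimum'' step being just a compact packaging of the paper's two-case split on whether $D_{PS}(C)\ge D$ or $D_{PS}(C)<D$. You are in fact slightly more careful than the paper, which silently divides by $D_{PS}(C)$ and does not address the degenerate case $D_{PS}(C)=0$ that you handle via the stated convention.
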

\begin{proof}
	By Lemma~\ref{lem:PS-D}, 
	if $D_{PS}(C)\ge D$ then
	\begin{align*}
	\cC_{PS(C,D)}(C)
	& =\frac{D}{D_{PS}(C)}\cdot\cC_{PS}(C)\\
	& =D\cdot\frac{\cC_{PS}(C)}{D_{PS}(C)}=D\cdot{\bC}_{PS}(C).
	\end{align*}
	Otherwise, $D_{PS}(C)<D$ 
	and $\cC_{PS(C,D)}(C)=\cC_{PS}(C)$ 
	by the definition of $PS(C,D)$, and 
	$\cC_{PS}(C)=D_{PS}(C)\cdot {\bC}_{PS}(C)$,
	by the definition of the normalized capacity.
\end{proof}

In this paper we study the BSC, the BEC, and the $Z$ channel. In these channels, the cells store binary information, where in the BSC a programming failure changes the bit value in the cell, 
in the BEC, a failure causes an erasure of an information bit, and lastly in the $Z$ channel only the programming of cells which are programmed with value zero can fail.

For the $Z$ channel, which is not a symmetric channel, the average delay depends also on the code, in particular, on the number of zeros in the codewords.
Thus, the $Z$ channel is discussed in a different section,
Section~\ref{sec:Z-channel},  in which we state similar definitions to Problem~\ref{prob:main} and to the related concepts, $D_t(C)$,  $PS(C,D)$, and the normalized capacity.

The following table summarizes most of the notations used in this paper.
\begin{table}[h]
	\begin{center}
		\caption{Summary of notations. \label{tab:abbrev}}
		\begin{tabular}{| C{1.4cm}  | C{6.6cm} |}
			\hline
			Notation &  Description 
			\\
			\hline
			$PS_t$ & Programming scheme with at most $t$ attempts 
			\\
			\hline
			$\cP_T$ & The set of all programming schemes with maximum delay $T$  
			\\
			\hline
			$D_{PS}(C)$&  
			The average delay of $PS$ over the channel $C$ 
			\\
			$D_{t}(C)$ &  
			The average delay of $PS_t$ over the channel $C$ 
			\\
			$D_t(p)$ &  
			The average delay of $PS_t$ over the $BSC(p)$ or $BEC(p)$ 
			\\
			\hline
			$\cC_{PS}(C)$&  
			The capacity of $PS$ over channel $C$ 
			\\
			$\cC_{t}(C)$ &  
			The capacity of $PS_t$ over channel $C$ 
			\\
			\hline
			$F_1(C,D,T)$  Problem~\ref{prob:main} &
			The maximum capacity of channel $C$ using $PS\in \cP_T$ under an average delay constraint $D$ 
			\\
			\hline
			$PS(C,D)$ & Adjusted $PS$ to meet the constraint $D$ for channel $C$ 
			\\	
			\hline
			$\bC_{PS}(C)$&  
			The normalized capacity of $PS$ over channel $C$ 
			\\
			$\bC_{t}(C)$ &  
			The normalized capacity of $PS_t$ over 	channel $C$ 
			\\
			\hline
			$BSC(p)$ &  
			The binary symmetric channel with crossover probability~$p$, $0\le p\le 0.5$ 
			\\
			$BEC(p)$ &  
			The binary symmetric channel with erasure probability~$p$, $0\le p\le 1$
			\\
			$Z(p)$ &  
			The Z channel  with error probability $p$, $0\le p\le 1$
			\\
			\hline
			$Z(p,\alpha)$ &  
			The Z channel  with error probability $p$ and $\alpha$ ones in each codeword, $0\le p, \alpha\le 1$
			\\
			\hline
			$PS_{q,t}$ & $PS_t$ where in the last attempt a question-mark is written with probability $1-q$ 
			\\
			\hline
			$\cC_{q,t}(C)$ & The capacity of $PS_{q,t}$ over channel $C$  
			\\
			\hline
		\end{tabular}
	\end{center}
\end{table}

\section{
	The BSC and the BEC}
\label{sec:mainProb}
In this section we study Problem~\ref{prob:main} for the BSC and the BEC.
Note that the results for the BSC have already been studied in \cite{BL10}, however we present them here in order to compare with the BEC and since these results will be used in Section~\ref{sec:generalProb1} for the case of programming with different error probabilities, and in Section~\ref{sec:psAsBSCandBEC} for a new model. 
Additionally, the translation between the notations by Bunte and Lapidoth~\cite{BL10} and our formulation, is not immediate, and hence we found this repetition to be important for the readability and completeness of the results in the paper.
For the same reasons, we provide proofs for some of the results on the BSC in this section.

 According to well known results on the capacity of the BSC and the BEC we 
 first establish the following lemma.
\begin{lemma}\label{lem:errorProbP_D_R}
For the programming scheme $PS_t$, $t\ge -1$, and error probability $p$ for 
the BSC and the BEC, the following properties hold:
\begin{enumerate}
\item For all $t\ge 1$, $\cC_{t}\left(BSC(p)\right)=1-h(p^t)$,
\item For all $t\ge 1$, $ \cC_{t}\left(BEC(p)\right)=1-p^t$,
\item $\cC_{-1}\left(BSC(p)\right)=\cC_{-1}\left(BEC(p)\right)=1$,
\item For all $t \ge 0$,\vspace{-2ex} $$D_{t}(p)\deff D_{t}\left(BSC(p)\right)=D_{t}\left(BEC(p)\right)=\frac{1-p^t}{1-p},$$
\vspace*{-2ex}
\item $D_{-1}(p)\deff D_{-1}\left(BSC(p)\right)=D_{-1}\left(BEC(p)\right)= \frac{1}{1-p}$.
\end{enumerate}
\end{lemma}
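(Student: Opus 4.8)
The plan is to handle all five items from a single observation: under $PS_t$ on either channel a cell is programmed at most $t$ times, each attempt fails independently with probability $p$, and the cell ends up erroneous exactly when all $t$ attempts fail. Hence the induced channel $C_t$ has residual failure probability $p^t$, and everything else follows by invoking the standard capacity formulas and a short expectation computation.

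First I would establish the capacity items (1)--(3). As noted in the text preceding the lemma, a programming scheme does not alter the \emph{type} of error, only its probability, so for $t\ge 1$ the channel $C_t$ is again a BSC (resp.\ BEC), now with parameter $p^t$. For the BSC the residual error is a bit flip, which occurs iff all $t$ independent flips occur, probability $p^t$; thus $C_t=BSC(p^t)$ and item (1) is the classical formula $\cC(BSC(q))=1-h(q)$ at $q=p^t$. For the BEC the residual event is an erasure of probability $p^t$, so $C_t=BEC(p^t)$ and item (2) is $\cC(BEC(q))=1-q$ at $q=p^t$. Item (3) is the $PS_{-1}$ case: this scheme reprograms until the stored value is correct, so (for $p<1$) the residual error probability is $0$, the channel is noiseless, and its capacity is $1$ for both models; equivalently it is the $t\to\infty$ limit of items (1)--(2).

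Next I would compute the average delays in items (4)--(5), where the cleanest route is linearity of expectation via indicator variables rather than summing $k\cdot\Pr[N=k]$ directly. Let $N$ be the number of attempts made under $PS_t$, and for $1\le i\le t$ let $X_i$ indicate that an $i$-th attempt is made. An $i$-th attempt occurs precisely when the first $i-1$ attempts all failed, an event of probability $p^{i-1}$ (independent failures, per-attempt success probability $1-p$, identically for the BSC and the BEC). Then $N=\sum_{i=1}^{t}X_i$, so $D_t(p)=\mathbb{E}[N]=\sum_{i=1}^{t}p^{i-1}=\sum_{j=0}^{t-1}p^j=\frac{1-p^t}{1-p}$, which is item (4). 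Item (5) is the $t\to\infty$ limit, namely the mean $\frac{1}{1-p}$ of the geometric number of trials until the first success.

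I do not expect a serious obstacle; the content is really a matching of the model to standard results. The one point that warrants care is justifying that $C_t$ is genuinely a symmetric BSC (resp.\ a BEC) with parameter $p^t$, and not some more complicated channel --- that is, that iterating the programming neither introduces new error types nor correlations across attempts. This is exactly the modelling convention stated just before the lemma (an error occurs iff there is an error on every one of the $t$ copies, with the error type unchanged), so I would invoke that convention explicitly, after which the standard capacity formulas apply verbatim and the delay sum telescopes as above.
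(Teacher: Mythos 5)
Your proof is correct and follows essentially the same route as the paper: the capacity items come from observing that a cell is erroneous iff all $t$ attempts fail (residual probability $p^t$) and invoking the standard BSC/BEC capacity formulas, and the delay items come from summing the probabilities $p^{i-1}$ that an $i$-th attempt occurs, which is exactly the paper's tail-sum over $q_i$. Your indicator-variable phrasing and the explicit remark about $C_t$ remaining a genuine BSC/BEC are just slightly more careful presentations of the same argument.
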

\begin{proof}

For the programming scheme $PS_t$, $t\ge 1$, a cell will be erroneous if all its~$t$ programmings have failed, which happens with probability~$p^t$. According to the known results on the capacity of the BSC and the BEC, we conclude claims 1, 2 and 3 in the lemma regarding the capacity of $PS_t$ over the channels $BSC(p)$ and $BEC(p)$.

The average delay of $PS_t$ for $t\ge 1$ is computed as follows. Let $q_i$ be the probability that a cell is programmed at least $i$ times, $i\ge 1$. Thus, $q_i=p^{i-1}$, and the average number of attempts to program a cell both for the BSC and the BEC is equal to $\sum_{i=1}^t q_i$. Then, we conclude that \vspace{-1ex}$$D_t(p)=\sum_{i=1}^t q_i =\sum_{i=0}^{t-1}p^i = \frac{1-p^t}{1-p}.$$
For $t=0$ the average delay is zero, and for $t=-1$ the average delay is 
\vspace*{-1ex}
$$D_{-1}(p)=\sum_{i\ge 1} q_i =\sum_{i\ge 0}p^i = \frac{1}{1-p}.\vspace{-2ex}
\vspace{-2ex}$$
\end{proof}

The next theorem compares between the normalized capacity of $PS_t$ and $PS_{t+1}$ over the $BSC(p)$ and the $BEC(p)$, for each $t\ge 1$. This result is used next in Corollary~\ref{cor:solutionProb1} which establishes the solution to Problem~\ref{prob:main} for these two channels. 
\begin{theorem}\label{thm:errorProbP_NORM-R}
For all $t\geq 1$ the following properties hold:
\begin{enumerate}
\item ${\bC}_{PS_{t}}\left(BSC(p)\right) \leq {\bC}_{PS_{t+1}}\left(BSC(p)\right)$,
\item ${\bC}_{PS_{t}}\left(BEC(p)\right)=1-p$.
\end{enumerate}
\end{theorem}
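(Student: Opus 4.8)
The plan is to evaluate both normalized capacities in closed form via Lemma~\ref{lem:errorProbP_D_R} and then reduce each claim to the behaviour of a single real function. Part~2 is essentially immediate: for $t\ge 1$ and $p>0$, Lemma~\ref{lem:errorProbP_D_R} gives $\cC_t(BEC(p)) = 1-p^t$ and $D_t(p) = (1-p^t)/(1-p)$, so the ratio defining ${\bC}_{PS_t}(BEC(p))$ collapses to $1-p$; the degenerate case $p=0$ yields ${\bC}_{PS_t}(BEC(0)) = 1 = 1-p$ as well.

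For Part~1, each normalized capacity carries a common factor $(1-p)$ arising from the delay, so after cancelling it the claim ${\bC}_{PS_t}(BSC(p)) \le {\bC}_{PS_{t+1}}(BSC(p))$ is equivalent to $\frac{1-h(p^t)}{1-p^t} \le \frac{1-h(p^{t+1})}{1-p^{t+1}}$. I would therefore introduce $f(x) = \frac{1-h(x)}{1-x}$ on $(0,1)$ and show it is nonincreasing on $(0,1/2]$. Because $0\le p\le 1/2$ forces $p^{t+1}\le p^t\le 1/2$, with both arguments lying in $(0,1/2]$, the monotonicity of $f$ immediately yields $f(p^t)\le f(p^{t+1})$, which is exactly the claim. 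The boundary cases are checked directly: $p=0$ makes both sides equal to $1$, and $p=1/2$ with $t=1$ gives ${\bC}_{PS_1}(BSC(1/2))=0$, so the inequality is trivial there.

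The heart of the argument is the derivative of $f$. Writing $g(x)=1-h(x)=1+x\log_2 x+(1-x)\log_2(1-x)$, a short computation gives $g'(x)=\log_2\frac{x}{1-x}$, and then the numerator of $f'(x)=\frac{g'(x)(1-x)+g(x)}{(1-x)^2}$ simplifies remarkably: the $(1-x)\log_2(1-x)$ contributions cancel and everything reduces to $f'(x)=\frac{1+\log_2 x}{(1-x)^2}$. Hence $f'(x)\le 0$ exactly when $\log_2 x\le -1$, i.e. when $x\le 1/2$, so $f$ is nonincreasing on $(0,1/2]$ as needed. The only genuinely nonroutine step is spotting that the messy numerator of $f'$ collapses to $1+\log_2 x$; once this is seen, the sign analysis and the reduction through $p^{t+1}\le p^t\le 1/2$ are routine, so the main obstacle is purely this calculus simplification rather than any conceptual difficulty.
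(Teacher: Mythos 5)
Your proposal is correct and follows essentially the same route as the paper: both reduce Part~1 to the monotonicity of $f(x)=\frac{1-h(x)}{1-x}$ on $[0,1/2]$ after factoring out $(1-p)$, and obtain Part~2 by direct cancellation. The only difference is that you supply the derivative computation showing $f'(x)=\frac{1+\log_2 x}{(1-x)^2}\le 0$ for $x\le 1/2$, a detail the paper merely asserts is ``possible to verify,'' and your computation is accurate.
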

\begin{proof}
It is possible to verify that the function $$f(x) = \frac{1-h(x)}{1-x}$$ is decreasing in the range $0\le x \le 0.5$,
and by Lemma~\ref{lem:errorProbP_D_R} we get
$${\bC}_{PS_{t}}\left(BSC(p)\right) = \frac{\cC_{t}\left(BSC(p)\right)}{D_{t}\left(BSC(p)\right)} = (1-p)\cdot f(p^t).$$
Thus, 
\begin{align*}
{\bC}_{PS_{t}}\left(BSC(p)\right) 
& = (1-p)\cdot f(p^t) \\
& \leq  (1-p)\cdot f(p^{t+1})={\bC}_{PS_{t+1}}\left(BSC(p)\right).
\end{align*}
For the BEC, by Lemma~\ref{lem:errorProbP_D_R}
we have $${\bC}_{PS_{t}}\left(BEC(p)\right)=\frac{\cC_{t}\left(BEC(p)\right)}{D_{t}\left(BEC(p)\right)} =1-p.\vspace{-2ex}$$
\end{proof}

The solutions to Problem~\ref{prob:main}
for the $BSC(p)$ and the $BEC(p)$ are presented in
Corollary~\ref{cor:solutionProb1},
where the result for the BSC has already presented in \cite{BL10}; see Proposition~5 therein.
Note that if $D\geq\frac{1}{1-p}$
then the average delay is not constrained, since the average delay of any $PS$ does not exceed the average delay of $PS_{-1}$ which equals to $\frac{1}{1-p}$
(see also~\cite{BL10}). 
\begin{corollary}\label{cor:solutionProb1}
For $T\geq -1$, denote $D'=\min\{ D_T(p),D\}$.
The solution to Problem~\ref{prob:main} for the $BSC$ and the $BEC$ is as follows.
\begin{enumerate}
\item If $T\ge 0$ then
\begin{enumerate}
	\item $F_1(BSC(p), D,T) =D' \cdot 	\frac{(1-p)(1-h(p^T))}{1-p^T}$ and this value is obtained by the programming scheme $PS_T(BSC(p),D)$
\item $F_1(BEC(p), D,T) = D'\cdot(1-p)$ and this value is obtained by the programming scheme  $PS_t(BEC(p),D)$ 
for any $t$ such that $0\le t \le T$ and $D_{t}(C)\ge D'$,
\end{enumerate}
\item $F_1(BSC(p), D,-1) = F_1(BEC(p), D,-1) =D'\cdot (1-p)$ and this value is obtained by the programming scheme $PS_{-1}(BEC(p),D)$ for the $BSC$, and by the programming scheme $PS_{t}(BEC(p),D)$ for any $t$ such that $D_{t}(C)\ge D'$ for the $BEC$.
\end{enumerate}
\end{corollary}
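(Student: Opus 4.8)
The plan is to turn the constrained maximization over all of $\cP_T$ into a statement about a single quantity, the largest achievable normalized capacity, by exploiting the identity $\cC_{PS}(C)=D_{PS}(C)\cdot{\bC}_{PS}(C)$ that underlies \Lref{lem:connect_prob1AndProb2_1}. Write a generic scheme as $PS=PS\left((\beta_1,t_1),\ldots,(\beta_\ell,t_\ell)\right)\in\cP_T$ and discard the components with $t_i=0$, which contribute $0$ to both $\cC_{PS}(C)$ and $D_{PS}(C)$. For the remaining components I would use $\cC_{t_i}(C)=D_{t_i}(C)\cdot{\bC}_{t_i}(C)$ to obtain
$$\cC_{PS}(C)=\sum_i\beta_i D_{t_i}(C)\,{\bC}_{t_i}(C)\le {\bC}^{*}\sum_i\beta_i D_{t_i}(C)={\bC}^{*}\,D_{PS}(C),$$
where ${\bC}^{*}=\max_t{\bC}_{t}(C)$ ranges over the admissible values of $t$. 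By \Lref{lem:errorProbP_D_R}, ${\bC}_{t}(BSC(p))=(1-p)f(p^t)$, which \Tref{thm:errorProbP_NORM-R} shows to be nondecreasing in $t$, so over $1\le t\le T$ its maximum is ${\bC}_{T}(BSC(p))=(1-p)\frac{1-h(p^T)}{1-p^T}$, while for the BEC every ${\bC}_{t}(BEC(p))$ equals $1-p$.

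Next I would supply the second ingredient of the upper bound, namely $D_{PS}(C)\le D_T(p)$: since $D_t(p)=\frac{1-p^t}{1-p}$ is increasing in $t$ by \Lref{lem:errorProbP_D_R}, any convex combination of the $D_{t_i}(p)$ with $t_i\le T$ is at most $D_T(p)$. Together with the constraint $D_{PS}(C)\le D$ this yields $D_{PS}(C)\le\min\{D_T(p),D\}=D'$, so $\cC_{PS}(C)\le{\bC}^{*}\,D'$, which is precisely the value claimed in each case. For achievability I would evaluate the proposed schemes via \Lref{lem:connect_prob1AndProb2_1}: for the BSC, $\cC_{PS_T(BSC(p),D)}(BSC(p))=\min\{D,D_T(p)\}\cdot{\bC}_{T}(BSC(p))=D'\cdot{\bC}_{T}(BSC(p))$, and by \Lref{lem:PS-D} this scheme lies in $\cP_T$ with delay $D'\le D$. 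For the BEC the same computation gives $\min\{D,D_t(p)\}\cdot(1-p)$ for $PS_t(BEC(p),D)$, and the side condition $D_t(C)\ge D'$ is exactly what forces $\min\{D,D_t(p)\}=D'$. The case $T=-1$ follows identically once ${\bC}^{*}=1-p$ is identified: because $f$ is decreasing with $f(0)=1$, every finite ${\bC}_{t}(BSC(p))=(1-p)f(p^t)\le 1-p={\bC}_{-1}(BSC(p))$, so the supremum is attained at $PS_{-1}$, whose delay and capacity are given by \Lref{lem:errorProbP_D_R}.

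The main obstacle is the careful identification of ${\bC}^{*}$ over the full admissible range. One must check that the $t_i=0$ components really can be dropped, that \Tref{thm:errorProbP_NORM-R} (stated only for $t\ge1$) together with the direct evaluation ${\bC}_{-1}(C)=1-p$ does pin down the global maximum of the normalized capacity, and that the two ways in which the bound $D_{PS}(C)\le D'$ can become tight, through $D$ or through $D_T(p)$, are both realized by the schemes $PS_T(C,D)$ and $PS_t(C,D)$ proposed as optimal. Everything else reduces to substituting the closed forms from \Lref{lem:errorProbP_D_R}.
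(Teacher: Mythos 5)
Your proposal is correct and follows essentially the same route as the paper's proof: the same decomposition $\cC_{PS}(C)=\sum_i\beta_i D_{t_i}(C)\,{\bC}_{PS_{t_i}}(C)$, bounded by the maximal normalized capacity from \Tref{thm:errorProbP_NORM-R} and by $\min\{D,D_T(C)\}$, with achievability via \Lref{lem:connect_prob1AndProb2_1}. The only difference is that you spell out the $t_i=0$ components and the $T=-1$ case (via $f(0)=1$) explicitly, which the paper leaves implicit.
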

\begin{proof}
In order to find the value of $F_1(C,D,T)$ where $C$ is either the $BSC(p)$ or the $BEC(p)$, we let $PS = PS\left((\beta_1, t_1),\ldots, (\beta_{\ell},t_{\ell})\right) \in \cP_T$ be a programming scheme which meets the constraint $D$, that is,
$$D_{PS}(C)=\sum_{i=1}^{\ell}  \beta_i\cdot D_{t_i}(C)\le \min \{D, D_T(C)\}.$$
Then, for $C=BSC(p)$ the capacity of the programming scheme $PS$ over $C$ 
satisfies\vspace{-2ex}
\begin{align*}
\cC_{PS}(C)
 & =\sum_{i=1}^{\ell} 
 \beta_i\cdot \cC_{t_i}(C)\\
 & \overset{(1)}{=}\sum_{i=1}^{\ell} 
 \beta_i\cdot D_{t_i}(C) \cdot {\bC}_{PS_{t_i}}(C)\\
 & \overset{(2)}{\le}\sum_{i=1}^{\ell} 
 \beta_i\cdot D_{t_i}(C) \cdot {\bC}_{PS_{T}}(C)\\
 & = {\bC}_{PS_{T}}(C) \sum_{i=1}^{\ell} 
 \beta_i\cdot D_{t_i}(C)\\
 & \overset{(3)}{\le} {\bC}_{PS_{T}}(C) \cdot \min \{D, D_T(C)\} \\
 & \overset{(4)}{=} 
 \cC_{PS_{T}(C,D)}(C),
\end{align*}
where $(1)$ is by the definition of the normalized capacity,
$(2)$ is by Theorem~\ref{thm:errorProbP_NORM-R},
$(3)$ is by 
$D_{PS}(C)=\sum_{i=1}^{\ell} 
\beta_i\cdot D_{t_i}(C)\le \min \{D, D_T(C)\}$,
and $(4)$ is by 
Lemma~\ref{lem:connect_prob1AndProb2_1}.
A similar proof holds for $F_1(BEC(p),D,T)$.
\end{proof}

\begin{remark}
The claims in Lemma~\ref{lem:errorProbP_D_R} regarding the BSC were presented in Proposition~3 in~\cite{BL10}, and the result in Corollary~\ref{cor:solutionProb1} for the BSC was presented in Proposition~5 in~\cite{BL10}. We note that in Proposition~3 in~\cite{BL10}, $\epsilon, \zeta$ is equivalent to $p, D-1$ in our notations, respectively. Furthermore, the gap in the solution from~\cite{BL10} and our result stems from the fact that we let cells to be not programmed at all, while in \cite{BL10} a cell has to be programmed at least once. Thus, the translation between these two approaches can be done by substituting the average delay constraint $D$ with $\zeta+1$.
\end{remark}

\section{The $Z$ Channel}
\label{sec:Z-channel}
In this section we study programming schemes for the $Z$ channel with error probability $p$, i.e., $0$ is flipped to $1$ with probability $p$, where $0\le p\le1$. This channel is denoted by $Z(p)$.

The capacity of the channel $Z(p)$ was well studied in the literature; see e.g.~\cite{TAB02,V97}. 
We denote by $Z(p,\alpha)$ the $Z$ channel where $\alpha$ is the probability for occurrence of $1$ in a codeword, and $p$ is the crossover $0\to 1$ probability. The capacity of $Z(p,\alpha)$ was shown to be~\cite{TAB02,V97}
$$\mathsf{cap}(Z(p,\alpha)) \deff h((1-\alpha)(1-p)) - (1-\alpha)h(p). $$

In the $Z$ channel, 
the average delay of programming a zero cell is exactly as in the BSC and the BEC cases, 
but a cell with one value is programmed only once. 
Therefore, the average delay depends on the number of cells which are programmed with zero, and hence we define $D_t(Z(p,\alpha))$ as the average delay of the programming scheme $PS_t$ when the programming process is modeled by the channel $Z(p)$ and $\alpha$ is the fraction of ones in the codewords.
The capacity $\cC_t(Z(p,\alpha))$ is defined to be the capacity of the channel $Z(p)$ when $\alpha$ is the probability for occurrence of one in the codewords and the programming scheme $PS_t$ is applied.
The following lemma is readily proved.
\begin{lemma}\label{lem:errorProbP_Zchannel}
	For the programming scheme $PS_t$ and the channel $Z(p,\alpha)$, the following properties hold:
	\begin{enumerate}
	\item For all $t\ge 0$, 
	\begin{align*}
	\cC_{t}\left(Z(p,\alpha)\right)
	& = 
	\mathsf{cap}(Z(p^t,\alpha)) \\
	& =h((1-\alpha)(1-p^t)) - (1-\alpha)h(p^t),
	\end{align*}
	\item 
	$\cC_{-1}\left(Z(p,\alpha)\right)=h(\alpha)$,
	\item For all $t \ge 1$, 
	$D_t(Z(p,\alpha))=
	\frac{(1-\alpha)(1-p^t)}{1-p} + \alpha$,
	\item $D_{-1}(Z(p,\alpha))
	= \frac{1-\alpha}{1-p}+\alpha$, $D_0(Z(p,\alpha))=0$.
\end{enumerate}
\end{lemma}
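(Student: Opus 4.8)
The plan is to verify each of the four claims by direct computation, treating the $Z$ channel with parameters $(p,\alpha)$ as a combination of two independent contributions: the $(1-\alpha)$-fraction of cells carrying a zero, which behave exactly like BSC/BEC cells under $PS_t$, and the $\alpha$-fraction carrying a one, which never fail and hence are programmed exactly once. The key observation, stated in the text preceding the lemma, is that under $PS_t$ a zero-cell is flipped to a one only if all $t$ programming attempts fail, which happens with probability $p^t$; a one-cell is always programmed correctly on its single attempt. This means that applying $PS_t$ to the channel $Z(p)$ yields the channel $Z(p^t)$, now with the same fraction $\alpha$ of ones, since $PS_t$ does not change the input distribution, only the effective crossover probability.

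For claim~(1), I would argue that since $PS_t$ transforms $Z(p)$ into $Z(p^t)$ without altering $\alpha$, the capacity is immediately $\cC_t(Z(p,\alpha)) = \mathsf{cap}(Z(p^t,\alpha))$, and then substitute the known capacity formula $\mathsf{cap}(Z(p,\alpha)) = h((1-\alpha)(1-p)) - (1-\alpha)h(p)$ with $p$ replaced by $p^t$ to obtain the displayed expression. For claim~(2), the scheme $PS_{-1}$ programs every cell until success, so the effective crossover probability is $0$; the channel becomes noiseless and the capacity reduces to the entropy of the input distribution, namely $h(\alpha)$ (equivalently, take $t\to\infty$ so $p^t\to 0$ in the formula from claim~(1), which gives $h(1-\alpha) = h(\alpha)$).

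For the average-delay claims~(3) and~(4), I would compute the expected number of programming attempts by conditioning on the cell's target value. A one-cell contributes $1$ attempt with probability~$\alpha$. A zero-cell, with probability $1-\alpha$, is programmed according to the same rule as in the BSC/BEC analysis, so by Lemma~\ref{lem:errorProbP_D_R} its expected number of attempts under $PS_t$ is $D_t(p) = \frac{1-p^t}{1-p}$ for $t\ge 1$ and $D_{-1}(p) = \frac{1}{1-p}$. Taking the weighted average gives
\begin{align*}
D_t(Z(p,\alpha)) &= (1-\alpha)\cdot\frac{1-p^t}{1-p} + \alpha\cdot 1 = \frac{(1-\alpha)(1-p^t)}{1-p} + \alpha
\end{align*}
for $t\ge 1$, and $D_{-1}(Z(p,\alpha)) = \frac{1-\alpha}{1-p} + \alpha$ for $t=-1$; for $t=0$ no cell is programmed at all, so $D_0(Z(p,\alpha)) = 0$.

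None of the steps presents a genuine obstacle: the whole lemma is a routine bookkeeping exercise once one accepts that the one-cells decouple from the zero-cells and that $PS_t$ merely replaces $p$ by $p^t$. The only point requiring a moment's care is the symmetry identity $h(\alpha) = h(1-\alpha)$ used to reconcile the $t\to\infty$ limit of claim~(1) with claim~(2), and the bookkeeping convention that under $PS_0$ a cell receives zero programming attempts (consistent with $D_0 = 0$ established for the symmetric channels), which is why the formula in claim~(3) is stated only for $t\ge 1$.
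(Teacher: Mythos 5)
Your proof is correct and follows exactly the route the paper intends: the paper itself omits the argument (declaring the lemma ``readily proved''), but the sentence immediately preceding it --- that a zero-cell behaves exactly as in the BSC/BEC case while a one-cell is programmed only once --- is precisely your decomposition, and your weighted-average delay computation and the substitution $p \mapsto p^t$ in $\mathsf{cap}(Z(p,\alpha))$ are the intended steps. No gaps.
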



Let $PS\left((\beta_1, t_1),(\beta_2, t_2),\ldots, (\beta_{\ell},t_{\ell})\right)$ be a programming scheme, and
$\balpha=(\alpha_1,\alpha_2,\ldots,\alpha_{\ell})$
where $0\le \alpha_i \le 1$, for all $1\leq i\leq \ell$.
Then, we define 
$$\cC_{PS}(Z(p,\balpha))=\sum_{i=1}^{\ell} 
\beta_i\cdot \cC_{t_i}(Z(p,\alpha_i)),$$
that is $\cC_{PS}(Z(p,\balpha))$ is the capacity of $Z(p)$ while using the programming scheme $PS$ and the parameter $\balpha$.
Similarly, we define the average delay of the programming scheme $PS$ for $Z(p)$ using the parameter $\balpha$ as
$$D_{PS}(Z(p,\balpha))=
\sum_{i=1}^{\ell} 
\beta_i\cdot \cD_{t_i}(Z(p,\alpha_i)).$$

Thus, we formulate 
Problem~\ref{prob:main} for the $Z$ channel as follows.
\begin{customProblem}{\ref{prob:main} - $Z$ channel}\label{prob:main_Z}
	Given the channel $Z(p)$, an average delay $D$, and a maximum delay $T$, find a programming scheme, $PS\in \cP_T$, and a vector $\balpha$ which maximize the capacity $\cC_{PS}(Z(p,\balpha))$, under the constraint that $D_{PS}(Z(p,\balpha)))\le D$. 
	In particular, given $Z(p)$, $D$, and $T$, find the value of
	$$F_1(Z(p),D,T) = \max_{PS\in \cP_T,\balpha \ :D_{PS}(Z(p,\balpha))\le D}\{\cC_{PS}(Z(p,\balpha))\}.$$
\end{customProblem}


In order to solve Problem~\ref{prob:main} for the $Z$ channel, we use the normalized capacity of a programming scheme 
$PS_t$ over $Z(p,\alpha)$ which is defined as in Equation~(\ref{eq:normalizedCapacity}) for $t\ne0$ by
$$
	{\bC}_{t}(Z(p,\alpha)) = \frac{\cC_{t}(Z(p,\alpha))}{D_{t}(Z(p,\alpha))},
$$
and ${\bC}_{0}(Z(p,\alpha))=0$.
For $t\ge 0$ we have
$$
{\bC}_{t}(Z(p,\alpha))=\frac{(1-p)(h((1-\alpha)(1-p^t)) - (1-\alpha)h(p^t))}{(1-\alpha)(1-p^t)+\alpha(1-p)},
$$
and for $t=-1$ it holds that 
$${\bC}_{-1}(Z(p,\alpha))=\frac{(1-p)h(\alpha)}{(1-p\alpha)}.$$

Given $p,t$, the maximum normalized capacity of $PS_t$
is ${\bC}_{t}(Z(p)) = \max_{0\le \alpha \le 1}\left\{{\bC}_{t}(Z(p,\alpha))\right\}$,
and we denote by ${\alpha}^*(p,t)$ the value of $\alpha$ which achieves this capacity.
That is, $${\bC}_{t}(Z(p))={\bC}_{t}(Z(p,{\alpha}^*(p,t)))=\max_{0\le \alpha \le 1}\left\{{\bC}_{t}(Z(p,\alpha))\right\},$$
and the average delay $D_{t}(Z(p))$ is defined by
$$D_{t}(Z(p))=D_{t}(Z(p,\alpha^*(p,t))).$$

Next, we define the programming scheme $PS_t(Z(p),D)$ similarly to the definition in Equation~(\ref{eq:PS(C,D)}).
$PS_t(Z(p),D)$ is a scheme in which the cells are programmed by $PS_t$ 
until the average delay is $D$, and then the rest of the cells are not programmed.
That is, denote by $\beta=\frac{D}{D_{t}(Z(p))}$, and
$$
PS_t(Z(p),D) =
\begin{cases}
PS_t, &\hspace{-1cm} \mbox{if } D_{t}(Z(p))\le D\\ 
PS\left((1-\beta,0),(\beta, t)\right) ,  & 
\mbox{otherwise.} 
\end{cases}
$$
Given $T$, a constraint on maximum delay, we define  $t^*(T)=\arg \max_{0 \le t\le T} \{{\bC}_{t}(Z(p))\}$ for $T\ge0$ 
and $t^*(-1)=\arg \max_{-1 \le t} \{{\bC}_{t}(Z(p))\}$.

Thus, we can conclude the following corollary which is proved in a similar technique of 
Corollary~\ref{cor:solutionProb1}. The proof is presented in the appendix.
\begin{corollary}\label{cor:solutionProb1Z-channel}
	$F_1(Z(p), D,T) =\min\{D_T(Z(p)),D\}\cdot{\bC}_{PS_{t^*(D)}}\left(Z(p)\right)=
	{\cC}_{PS_{t^*(D)}(Z(p),D)}\left(Z(p)\right)$
	and this value is obtained by $PS_{t^*(T)}(Z(p),D)$ with parameter $\alpha^*(p,t^*(T))$.
\end{corollary}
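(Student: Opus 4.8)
The plan is to mirror the proof of Corollary~\ref{cor:solutionProb1}, replacing the monotonicity of the normalized capacity from Theorem~\ref{thm:errorProbP_NORM-R} by the two-step inequality $\bC_{t_i}(Z(p,\alpha_i))\le \bC_{t_i}(Z(p))\le \bC_{t^*(T)}(Z(p))$. First I would settle the second equality and achievability together. Writing $D^\ast=D_{t^*(T)}(Z(p))$ for the average delay of the optimal base scheme (this is the quantity denoted $D_T(Z(p))$ in the statement, and it coincides with it when $t^*(T)=T$, as happens whenever the normalized capacity is non-decreasing in $t$), the definition of $PS_{t^*(T)}(Z(p),D)$ together with the $Z$-channel normalized capacity gives exactly the analogue of Lemma~\ref{lem:connect_prob1AndProb2_1}: if $D^\ast\le D$ the scheme programs all cells by $PS_{t^*(T)}$ at $\alpha=\alpha^*(p,t^*(T))$, with capacity $\cC_{t^*(T)}(Z(p,\alpha^*(p,t^*(T))))=D^\ast\cdot\bC_{t^*(T)}(Z(p))$; otherwise it programs only the fraction $\beta=D/D^\ast$, with capacity $\beta\cdot\cC_{t^*(T)}(\cdots)=D\cdot\bC_{t^*(T)}(Z(p))$. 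In both cases the delay is at most $D$, so the scheme is feasible and attains $\min\{D^\ast,D\}\cdot\bC_{t^*(T)}(Z(p))$.

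For the upper bound I would take an arbitrary $PS=PS((\beta_1,t_1),\ldots,(\beta_\ell,t_\ell))\in\cP_T$ and a vector $\balpha=(\alpha_1,\ldots,\alpha_\ell)$ with $D_{PS}(Z(p,\balpha))\le D$. By Lemma~\ref{lem:errorProbP_Zchannel} the indices with $t_i=0$ contribute zero to both the capacity and the delay, while for $t_i\ne0$ I write $\cC_{t_i}(Z(p,\alpha_i))=D_{t_i}(Z(p,\alpha_i))\cdot\bC_{t_i}(Z(p,\alpha_i))$. Bounding each factor by $\bC_{t_i}(Z(p,\alpha_i))\le\bC_{t_i}(Z(p))\le\bC_{t^*(T)}(Z(p))$ --- the first inequality being the definition of $\bC_{t}(Z(p))$ as the maximum over $\alpha$, and the second the definition of $t^*(T)$ as the maximizer over $0\le t\le T$ --- and factoring the constant out of the sum yields
\begin{align*}
\cC_{PS}(Z(p,\balpha))
&\le \bC_{t^*(T)}(Z(p))\sum_{i=1}^{\ell}\beta_i D_{t_i}(Z(p,\alpha_i))\\
&= \bC_{t^*(T)}(Z(p))\cdot D_{PS}(Z(p,\balpha))
\le \bC_{t^*(T)}(Z(p))\cdot D,
\end{align*}
which already matches the claimed optimum whenever the delay constraint is binding, i.e. $D\le D^\ast$, where $\min\{D^\ast,D\}=D$.

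The step I expect to be the main obstacle is the non-binding regime $D>D^\ast$. There the plain inequality only gives $\bC_{t^*(T)}(Z(p))\cdot D$, whereas the claimed value is the capped quantity $D^\ast\cdot\bC_{t^*(T)}(Z(p))$, so one must argue that surplus delay budget cannot be converted into extra capacity. This is exactly where the $Z$ channel differs from the symmetric channels of Corollary~\ref{cor:solutionProb1}: since $\alpha$ can be tuned separately for raw capacity and for normalized capacity, I would have to verify carefully --- using the monotonicity of $D_t(Z(p,\alpha))$ in $t$ from Lemma~\ref{lem:errorProbP_Zchannel} and the bound $t_i\le T$ --- that no re-optimization of $\balpha$ beats $\bC_{t^*(T)}(Z(p))$ per unit of delay. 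This is the delicate part of the argument and the reason the full proof is deferred to the appendix; in the binding regime the two displayed inequalities already close the gap.
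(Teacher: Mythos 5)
Your proof follows essentially the same route as the paper's: the appendix proof is exactly your chain
$\cC_{PS}(Z(p,\balpha))=\sum_{i}\beta_i D_{t_i}(Z(p,\alpha_i))\,{\bC}_{PS_{t_i}}(Z(p,\alpha_i))\le {\bC}_{PS_{t^*(T)}}(Z(p))\sum_{i}\beta_i D_{t_i}(Z(p,\alpha_i))\le {\bC}_{PS_{t^*(T)}}(Z(p))\cdot D$,
justified by the same three observations (definition of normalized capacity, definition of $t^*(T)$ and of ${\bC}_{t}(Z(p))$ as a maximum over $\alpha$, and the delay constraint). The one issue you flag --- that in the non-binding regime $D>D_{t^*(T)}(Z(p))$ this chain only yields ${\bC}_{t^*(T)}(Z(p))\cdot D$ rather than the capped value $\min\{D_T(Z(p)),D\}\cdot{\bC}_{t^*(T)}(Z(p))$ --- is genuine, but the paper's own proof stops at exactly the same inequality and does not close that case either, so your attempt (which additionally spells out achievability) is no less complete than the original.
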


An explicit solution for the $Z$ channel can be obtained by finding the value of $t^*(T)$ and $\alpha^*(p,t^*(T))$.
We could not solve it explicitly, however we present some computational results.
By the partial derivative of ${\bC}_{t}(Z(p,\alpha))$ with respect to $\alpha$, we get that given $p$ and $t$, $\alpha^*(p,t)$
is a root of the following function\footnote{All logarithms in this paper are taken according to base 2.}
\begin{align*}
f(p,t)=&(1-p)(1-p^t)\log ((1-\alpha)(1-p^t))\\
&+(2p^t-1-p^{t+1})\log(1-(1-\alpha)(1-p^t) )\\
& +(1-p)h(p^t).
\end{align*}

In Figure~\ref{fig:capacityZp} we present plots of the normalized capacity ${\bC}_{t}(Z(p))$ for $t\in \{-1,1,2,3,4\}$. We also compare between ${\bC}_{t}(Z(p))$ and $1-p$, the maximum normalized capacity for the $BSC(p)$ and $BEC(p)$, which is smaller than ${\bC}_{t}(Z(p))$  for almost all the values of $t$.
Following these computational results, we conjecture that ${\bC}_{t}(Z(p))\le{\bC}_{t+1}(Z(p))$ for all $t \ge 0$ and thus, $F_1(Z(p),D,T)=\min\{D,D_T(Z(p))\}\cdot {\bC}_{T}(Z(p)).$
\begin{figure}[h!]
	\raggedright
\leftskip-1.2em
	\includegraphics[scale=0.52]{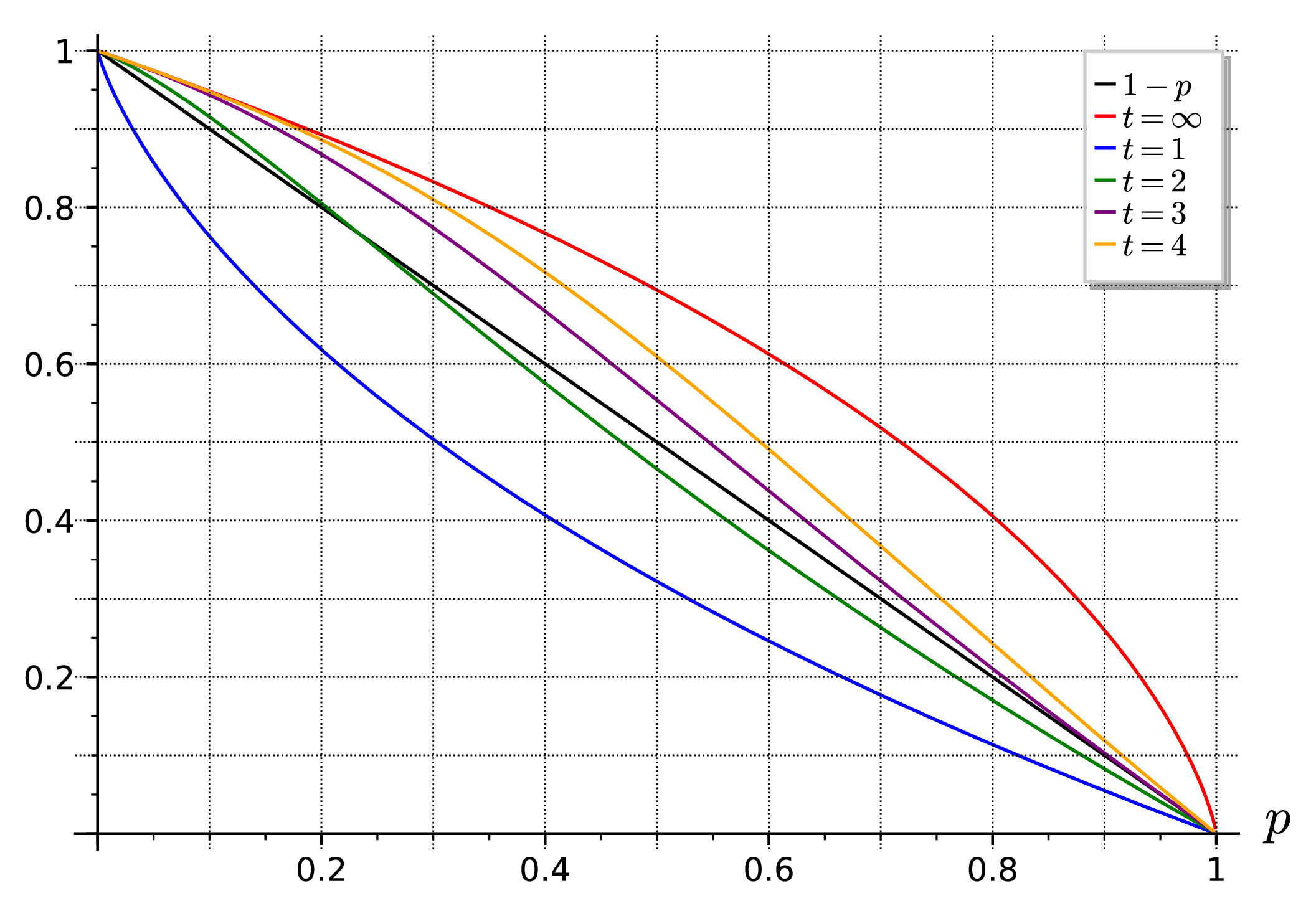}
	\caption{The normalized capacity of $PS_t$ over $Z(p)$ for some values of~$t$,
		comparing to $1-p$, the maximum normalized capacity for $BSC(p)$ and $BEC(p)$.}
	\label{fig:capacityZp}
\end{figure}

There are several similar models which can be solved with the same technique used for the $Z$ channel, for example, the asymmetric programming schemes set up.
Consider the BSC
in which if an error of $0\to 1$ is occurred, we use $PS_{t_1}$, 
and for an error $1\to 0$ we use $PS_{t_2}$.
The problem of finding an optimal programming scheme under this setup, can be solved in the same technique as for the $Z$ channel.

\section{Different Error Probabilities}\label{sec:generalProb1}
In this section we generalize the programming model we studied so far. We no longer assume that there is only a single channel which mimics the cell 
programming attempts, but each programming attempt has its own channel. We study and 
formulate this generalization only for the BSC and the BEC, however modifications for other channels can be handled similarly.

For the rest of this section, we refer the channel $C$ to either $BSC(p)$ or $BEC(p)$. We assume that it is possible to reprogram the cells, however the error probabilities on different programming attempts may be 
different. For example, for \emph{hard} cells in flash memories~\cite{LuHsWaISPP08,SuSuLiISPP95}, i.e., cells that their programming is more difficult, if the first attempt of a cell programming has failed, then the probability for failure on the second trial may be larger since the cell is hard to be programmed. In other cases, the error probability in the next attempt may be smaller since the previous trials might increase the success probability of the subsequent programming attempts.

Let $\bfP=(p_1,p_2,\dots)=(p_i)_{i=1}^{\infty}$ be a probabilities sequence, where $p_t$ is the error probability on the $t$-th programming attempt. We model the programming process as a transmission over the \emph{channel sequence}, $C(\bfP)=C(p_i)_{i=1}^{\infty}$, where on the $t$-th trial, the programming is modeled as transmission over the channel $C(p_t)$. That is, all the channels in $C(\bfP)$ 
have the same type of errors, but may have different error probabilities. Recall that for the BSC we assume that $0\le p_i \le 0.5$ for all $i\ge 1$, while for the BEC, $0\le p_i \le 1$.

For $t\geq -1$ and a channel sequence $C(\bfP)$, we denote by $D_t(C(\bfP))$ 
the average delay of the programming scheme $PS_t$, which is the expected number of 
times to program a cell when the programming process is modeled by $C(\bfP)$. 
For example, for the BSC (see Lemma~\ref{lem:difErrorProb_D_R}), 
\vspace{-2ex}
$$D_t(BSC(\bfP)) =  \sum_{i=0}^{t-1}\left(\prod_{j=1}^{i-1}p_j\right).$$

When a cell is programmed according to a programming scheme $PS_t$, we can model this process as transmission over the set of channels $\{C(p_i)\}_{i=1}^{t}$, and an error occurs if and only if there is an error in each of the $t$ channels. We denote this as a new channel $C_t(\bfP)$, and the capacity of this channel is denoted by $\cC_t(C(\bfP))$. Define $Q_i = \Pi_{j=1}^i p_j$ for $i\ge 1$. Then, for example, for $C=BSC$ we get $BSC_t(\bfP)= BSC(Q_t)$, and the capacity of this channel for $t\ge 1$ is 
$\cC_t(BSC(\bfP))=\cC(BSC(Q_t))=1-h(Q_t)$.

We focus on the set $\cP_T$ of the programming schemes that was defined in~(\ref{eq:programming schemes_T}). It can be readily verified that the average delay of a 
programming scheme $PS\in\cP_T$, $PS = PS\left((\alpha_1, 
t_1),\ldots,(\alpha_{\ell},t_{\ell})\right)$, over the channel sequence 
$C(\bfP)$ is given by
\vspace{-2ex} 
$$D_{PS}(C(\bfP)) =  
\sum_{i=1}^t\alpha_iD_{t_i}(C(\bfP)),$$ and the definition of the capacity is extended
as follows 
\vspace{-2ex}
$$\cC_{PS}(C(\bfP)) = \sum_{i=1}^t\alpha_i \cC_{t_i}(C(\bfP)).$$ 

We are now ready to formally define the problem we study in this section.
\begin{problem}\label{prob:mainVectorProb}
Given a probabilities sequence $\bfP$ with a channel $C\in\{BSC,BEC\}$, an average delay $D$, and a maximum delay $T$, find a programming scheme $PS\in \cP_T$, which maximizes the capacity $\cC_{PS}(C(\bfP))$, under the constraint that $D_{PS}(C(\bfP))\le D$. In particular, find the value  of
$$F_2( C(\bfP), D,T ) = \max_{PS\in \cP_T:D_{PS}(C(\bfP))\le D}\{\cC_{PS}(C(\bfP))\}.$$
\end{problem}

We note that the results presented in Section~\ref{sec:mainProb} regarding Problem~\ref{prob:main} can be derived from the solutions for Problem~\ref{prob:mainVectorProb} presented in this section by substituting $p_i=p$ for all $i\ge 1$.

For $\bfP=(p_1,p_2,\ldots)$ and $Q_i = \Pi_{j=1}^i p_j$, define
$Y_t\deff \sum_{i=1}^{t-1}Q_i$ for $t\ge 1$ ($Y_1=0$), and $Y_{-1}=\deff \sum_{i=1}^{\infty}Q_i$. The next lemma 
establishes the basic properties on the average delay and the capacity of these channels.
	\begin{lemma}\label{lem:difErrorProb_D_R}
	For the programming scheme $PS_t$,
	and $\bfP=(p_1,p_2,\ldots)$,
	the following properties hold:
	\begin{enumerate}
		\item For $t\ge 1$, $ \cC_t(BSC(\bfP))=1-h(Q_t)$,
		\item For $t\ge 1$, $ \cC_t(BEC(\bfP))=1-Q_t$,
		\item $\cC_{-1}(BSC(\bfP))=\cC_{-1}(BEC(\bfP))=1$,
		\item For $t \hspace{-0.07cm}\ge \hspace{-0.07cm}\hspace{-0.05cm} 1$, $D_t(\bfP)  \hspace{-0.07cm}\deff  \hspace{-0.07cm}D_t(BSC(\bfP)) \hspace{-0.07cm}=  \hspace{-0.07cm}D_t(BEC(\bfP)) \hspace{-0.07cm}=  \hspace{-0.07cm}1 \hspace{-0.07cm}+ \hspace{-0.07cm} Y_t$,
		\item $D_0(\bfP)=D_0(BSC(\bfP))= D_0(BEC(\bfP))= 0$,
		\item $D_{-1}(\bfP)\hspace{-0.05cm}=\hspace{-0.05cm}D_{-1}(BSC(\bfP))\hspace{-0.05cm}= \hspace{-0.05cm}D_{-1}(BEC(\bfP))\hspace{-0.05cm}=\hspace{-0.05cm} 1+Y_{-1}$.
	\end{enumerate}
\end{lemma}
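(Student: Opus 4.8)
The plan is to prove Lemma~\ref{lem:difErrorProb_D_R} by reducing each claim to the corresponding claim in Lemma~\ref{lem:errorProbP_D_R}, with the single substitution of the constant error probability $p^t$ by the cumulative product $Q_t=\prod_{j=1}^t p_j$. The key structural observation is exactly the one stated in the excerpt: when a cell is programmed under $PS_t$ over the channel sequence $C(\bfP)$, a final error occurs if and only if \emph{every one} of the $t$ independent programming attempts fails, and the $i$-th attempt fails with probability $p_i$. By independence across attempts this combined failure probability is $\prod_{i=1}^t p_i = Q_t$, so the effective channel $C_t(\bfP)$ is $BSC(Q_t)$ in the symmetric case and $BEC(Q_t)$ in the erasure case. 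Once this identification is made, claims 1 and 2 follow immediately from the known single-shot capacities $1-h(Q_t)$ and $1-Q_t$, and claim 3 follows since $PS_{-1}$ programs until success and hence drives the error probability to $0$, giving capacity $1$ for both channels.

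For the delay claims 4--6, I would follow the same counting argument used in the proof of Lemma~\ref{lem:errorProbP_D_R}. Let $q_i$ denote the probability that a cell requires \emph{at least} $i$ programming attempts. A cell reaches attempt $i$ precisely when the first $i-1$ attempts have all failed, so $q_i = \prod_{j=1}^{i-1} p_j = Q_{i-1}$ (with the convention $q_1=Q_0=1$). The expected number of attempts under $PS_t$ is then $\sum_{i=1}^t q_i = \sum_{i=1}^t Q_{i-1} = 1 + \sum_{i=1}^{t-1} Q_i = 1 + Y_t$, which is claim 4. Claim 5 is the degenerate case $t=0$ where no attempt is made, giving delay $0$. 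For claim 6, the scheme $PS_{-1}$ programs indefinitely until success, so the same sum runs to infinity: $D_{-1}(\bfP) = \sum_{i\ge 1} Q_{i-1} = 1 + \sum_{i\ge 1} Q_i = 1 + Y_{-1}$. Note this matches the general formula $D_t(\bfP)=1+Y_t$ under the interpretation $Y_{-1}=\sum_{i=1}^\infty Q_i$.

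The only genuine subtlety, and the step I would treat most carefully, is the convergence of the series defining $Y_{-1}$ in claim 6, which is needed for $D_{-1}(\bfP)$ to be finite and well-defined. For the specialization $p_i = p$ one recovers a geometric series $\sum_{i\ge 0} p^i = 1/(1-p)$, convergent whenever $p<1$; but for a general sequence $\bfP$ with $p_i$ allowed to approach $1$ (the \emph{hard cell} regime mentioned in the text), the partial products $Q_i$ need not decay and the expected delay may diverge. I would therefore note that $Y_{-1}$ is finite, hence $PS_{-1}$ is an admissible scheme, exactly when $\sum_{i\ge 1} Q_i < \infty$; a clean sufficient condition is $\limsup_i p_i < 1$. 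This is the point where the generalization departs from the constant-probability case, so I would state it explicitly rather than silently inheriting the finiteness from Lemma~\ref{lem:errorProbP_D_R}. All remaining manipulations are the routine index shift $\sum_{i=1}^t Q_{i-1} = 1 + \sum_{i=1}^{t-1} Q_i$ together with the base-case checks $Y_1 = 0$ and $Q_0 = 1$, which require no further argument.
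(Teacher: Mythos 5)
Your proof is correct and follows essentially the same route as the paper: identify the effective channel after $t$ attempts as $BSC(Q_t)$ or $BEC(Q_t)$ since an error survives only if all $t$ independent attempts fail, and compute the delay as $\sum_i q_i$ with $q_i=Q_{i-1}$. Your added remark on the convergence of $Y_{-1}=\sum_{i\ge 1}Q_i$ for general $\bfP$ is a worthwhile caveat the paper leaves implicit, but it does not change the argument.
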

\begin{proof}
Note that $Q_t$ is the probability of an error in the first~$t$ attempts. Using 
the known capacities of the BSC and the BEC, we get the values for the 
capacities in cases 1-4.
	
The average delay of the programming scheme $PS_t$ over the channel sequence $BSC(\bfP)$ or $BEC(\bfP)$, which we denoted by $D_t(C(\bfP))$, is calculated as follows. Let $q_i$ be the probability that a cell is programmed at least $i$ times. Note that for $1 <  i < t$, $q_i=Q_{i-1}$ and $q_1=1$ for both cases.
Then, 
we conclude that for $t\ge 1$, 
$$D_t(C(\bfP))=\sum_{i=1}^t q_i =1+ 
\sum_{i=1}^{t-1}Q_i=1+Y_t,$$
and $D_{-1}(C(\bfP))=\sum_{i=1}^\infty q_i =1+ 
\sum_{i=1}^{\infty}Q_i= 1+Y_{-1}$.

\end{proof}

For this generalization of the problem, given a programming scheme $PS\in\cP_T$, the programming scheme $PS(C,D)$ and the normalized capacity are defined in a similar way as in the original definitions in Equations~(\ref{eq:PS(C,D)}) and~(\ref{eq:normalizedCapacity}), respectively.


The following lemma is a generalization of 
Lemma~\ref{lem:connect_prob1AndProb2_1}.
\vspace*{-0.2cm}
\begin{lemma}\label{lem:connect_prob1AndProb2_1_general}
Given a channel sequence $C'=C(\bfP)$, an average delay $D$, and a programming scheme $PS$,
the following holds,
$$\cC_{PS(C',D)}(C')=\min\{D,D_{PS}(C')\}\cdot {\bC}_{PS}(C').$$
\end{lemma}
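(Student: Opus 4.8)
The plan is to mirror, essentially verbatim, the proof of Lemma~\ref{lem:connect_prob1AndProb2_1}, since the present statement is just the lift of that lemma from a single symmetric channel $C$ to a channel sequence $C'=C(\bfP)$, and since the paper declares that $PS(C',D)$ and the normalized capacity $\bC_{PS}(C')$ are defined exactly as in Equations~(\ref{eq:PS(C,D)}) and~(\ref{eq:normalizedCapacity}). The only preliminary point I would verify is that the two identities of Lemma~\ref{lem:PS-D} transfer unchanged to the sequence setting; this is immediate, because those identities are purely formal consequences of the definition of $PS(C,D)$ (rescaling the programmed fraction by $\beta=D/D_{PS}(C)$ and appending an unprogrammed fraction), and none of that reasoning uses symmetry of the underlying channel.

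First I would split into the two cases appearing in the definition of $PS(C',D)$. If $D_{PS}(C')\ge D$, the generalized Lemma~\ref{lem:PS-D} gives $\cC_{PS(C',D)}(C')=\frac{D}{D_{PS}(C')}\cdot\cC_{PS}(C')$, and factoring out $D$ rewrites the right-hand side as $D\cdot\frac{\cC_{PS}(C')}{D_{PS}(C')}=D\cdot\bC_{PS}(C')$, using the definition of the normalized capacity (here $D_{PS}(C')\ge D>0$, so the division is legitimate). In the complementary case $D_{PS}(C')<D$, the definition of $PS(C',D)$ leaves the scheme unchanged, so $\cC_{PS(C',D)}(C')=\cC_{PS}(C')$, which the definition of the normalized capacity rewrites as $D_{PS}(C')\cdot\bC_{PS}(C')$. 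Combining the two cases, the common factor $\bC_{PS}(C')$ is multiplied by $D$ in the first case and by $D_{PS}(C')$ in the second, i.e. by $\min\{D,D_{PS}(C')\}$, which is exactly the claim.

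I expect no genuine obstacle: the argument is formal and independent of the particular channel, and each auxiliary fact (both parts of Lemma~\ref{lem:PS-D} and the piecewise definition of $\bC_{PS}(C')$) is already available for sequences by the ``defined similarly'' remark preceding the lemma. The one subtlety I would explicitly flag is the degenerate boundary $D_{PS}(C')=0$, which arises only for $PS=PS((1,0))$; there $\bC_{PS}(C')=\cC_{PS}(C')=0$, both sides vanish, and the division in the definition of $\bC_{PS}(C')$ is never invoked, so the identity still holds.
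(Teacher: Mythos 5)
Your proposal is correct and matches the paper's approach: the paper omits the proof of the generalized lemma, and its proof of Lemma~\ref{lem:connect_prob1AndProb2_1} is exactly the same two-case argument (via Lemma~\ref{lem:PS-D} and the definition of the normalized capacity) that you carry over to the channel-sequence setting. Your extra remark on the degenerate boundary $D_{PS}(C')=0$ is a harmless refinement the paper handles only in a note after Lemma~\ref{lem:PS-D}.
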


Next we study the relation between ${\bC}_{PS_t}(C(\bfP))$ and 
${\bC}_{PS_{t+1}}(C(\bfP))$ both for the BSC and the BEC and for arbitrary 
probabilities sequence $\bfP$.

\begin{theorem}\label{thm:BSC-rate_relation}
For $t\ge 1$, and  $\bfP=(p_1, p_2, \ldots)$ such that for all $i$, $0\le p_i  \le 0.5$, there exists
$${\bC}_{PS_t}(BSC(\bfP)) \le {\bC}_{PS_{t+1}}(BSC(\bfP)).$$
\end{theorem}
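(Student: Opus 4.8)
The plan is to reduce the claimed inequality, via Lemma~\ref{lem:difErrorProb_D_R}, to a single scalar inequality in $Q_t$, and then dispatch that using concavity of the binary entropy. First I would substitute the explicit expressions from Lemma~\ref{lem:difErrorProb_D_R}, namely
$${\bC}_{PS_t}(BSC(\bfP)) = \frac{1 - h(Q_t)}{1 + Y_t}, \qquad {\bC}_{PS_{t+1}}(BSC(\bfP)) = \frac{1 - h(Q_{t+1})}{1 + Y_{t+1}},$$
and record the structural relations $Y_{t+1} = Y_t + Q_t$ and $Q_{t+1} = p_{t+1} Q_t$, together with $Q_t \le p_1 \le 1/2$ and hence $Q_{t+1} \le Q_t/2$. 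Both denominators $1 + Y_t$ and $1 + Y_{t+1}$ are positive, so after clearing them and cancelling, the claim is equivalent to
$$(1 - h(Q_t))\,Q_t \le \bigl(h(Q_t) - h(Q_{t+1})\bigr)(1 + Y_t).$$

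Next, since $Y_t \ge 0$ gives $1 + Y_t \ge 1$, and since $Q_{t+1} \le Q_t \le 1/2$ makes $h(Q_t) - h(Q_{t+1}) \ge 0$ (as $h$ is increasing on $[0,1/2]$), it suffices to prove the stronger, delay-free inequality $(1 - h(a))a \le h(a) - h(b)$ with $a := Q_t \in (0, 1/2]$ and $b := Q_{t+1} \le a/2$ (the case $a = 0$ being trivial). Using monotonicity of $h$ on $[0,1/2]$ to replace $b$ by its worst case $a/2$, this follows from $(1 - h(a))a \le h(a) - h(a/2)$. Here I would invoke concavity of $h$: the tangent-line bound at $a$ gives $h(a/2) \le h(a) - \tfrac{a}{2}h'(a)$, so $h(a) - h(a/2) \ge \tfrac{a}{2}h'(a)$, and the desired inequality reduces to $h'(a) \ge 2(1 - h(a))$, i.e.\ to
$$\psi(a) := \log_2\tfrac{1-a}{a} - 2\bigl(1 - h(a)\bigr) \ge 0, \qquad a \in (0, 1/2].$$

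The remaining work, which I expect to be the only real obstacle, is the scalar inequality $\psi \ge 0$. I would prove it by showing $\psi$ is non-increasing on $(0,1/2]$ with $\psi(1/2) = 0$. A direct differentiation gives $\psi'(a) = \tfrac{1}{\ln 2}\bigl[2\ln\tfrac{1-a}{a} - \tfrac1a - \tfrac{1}{1-a}\bigr]$; substituting $u = (1-a)/a \ge 1$ turns the bracket into $2\ln u - 2 - u - 1/u$, and the elementary estimate $2\ln u \le u$ for $u \ge 1$ (the function $u - 2\ln u$ attains its minimum over $[1,\infty)$ at $u = 2$, where it equals $2 - 2\ln 2 > 0$) makes this negative. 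Hence $\psi$ decreases to $\psi(1/2) = 0$, so $\psi \ge 0$ throughout, which closes the chain and establishes the theorem. The same reduction also covers the degenerate cases (some $p_i = 0$, forcing $Q_t = 0$), since there the scalar inequality holds with equality at $a = 0$.
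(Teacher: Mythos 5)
Your proof is correct and follows essentially the same route as the paper: both arguments use $Y_{t+1}=Y_t+Q_t$, $Q_{t+1}=p_{t+1}Q_t\le Q_t/2$, and $1+Y_t\ge 1$ to reduce the claim to the single scalar inequality $(1-h(a))a\le h(a)-h(a/2)$ for $a\in[0,1/2]$, which is exactly the paper's asserted inequality $(1-h(x))(1+x)\le 1-h(x/2)\le 1-h(xp)$ in rearranged form. The only difference is that the paper states this inequality without proof (``it is possible to verify''), whereas you supply a complete and correct verification via the concavity tangent-line bound and the monotonicity of $\psi(a)=\log_2\frac{1-a}{a}-2(1-h(a))$.
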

\begin{proof}
First we state that for all $0\le x,p\le 0.5$ it holds that	
\begin{equation}\label{ineq}
(1-h(x))(1+x)\le 1-h(x/2)\le 1-h(xp).
\end{equation}
Now, we want to prove that
\begin{align*}
\frac{1-h(Q_{t})}{1+Y_{t}} & = {\bC}_{PS_{t}}\left(BSC(\bfP)\right)  \\
\vspace*{-0.2cm}
& \le {\bC}_{PS_{t+1}}\left(BSC(\bfP)\right)=\frac{1-h(Q_{t+1})}{1+Y_{t+1}}.
\end{align*}
This is equivalent to prove that
$$1-h(Q_{t}) \cdot  \frac{1+Y_{t+1}}{1+Y_{t}}\le
1-h(Q_{t+1}),$$
or\vspace*{-0.2cm}
$$1-h(Q_{t}) \cdot \frac{1+Y_{t}+Q_{t}}{1+Y_t} \le
1-h(p_{t+1}Q_{t}),$$
which holds if and only if
\vspace*{-0.2cm}
$$1-h(Q_{t}) \cdot \left( 1+\frac{Q_{t}}{1+Y_t} \right)\le
1-h(p_{t+1}Q_{t}).
\vspace*{-0.2cm}$$
	But, 
	\vspace*{-0.2cm} $$\frac{Q_{t}}{1+Y_t} \le Q_{t}$$ 
	and by substituting $x=Q_t$ and $p=p_{t+1}$ 
	in Inequality~(\ref{ineq}) we conclude that
	\begin{align*}
	(1-h(Q_{t}))\cdot \left(1+\frac{Q_{t}}{1+Y_t}\right)
	 & \le (1-h(Q_{t}))\cdot (1+Q_{t})\\
	 & \le (1-h(p_{t+1}Q_{t})),	
	 \vspace*{-0.2cm}
	\end{align*}
	and therefore 
	$$	{\bC}_{PS_{t}}\left(BSC(\bfP)\right)
	\le {\bC}_{PS_{t+1}}\left(BSC(\bfP)\right),$$
	as required.
\end{proof}

By the previous lemma we conclude the following 
corollary, which its proof is similar to the proof of 
Corollary~\ref{cor:solutionProb1},
using Lemma~\ref{lem:connect_prob1AndProb2_1_general}.
\begin{corollary}\label{or:solutionProb3-BSC}
For a channel sequence $C'=BSC(\bfP)$ the solution for Problem~\ref{prob:mainVectorProb} for $C'$ is
	 $F_2( C', D , T) =\cC_{PS_{T}(C',D)}(C') $ and it is obtained by the programming scheme $PS_{T}(C',D)$. 
\end{corollary}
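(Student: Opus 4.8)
The plan is to mirror the proof of Corollary~\ref{cor:solutionProb1}, replacing the symmetric channel $C$ by the channel sequence $C'=BSC(\bfP)$, Theorem~\ref{thm:errorProbP_NORM-R} by Theorem~\ref{thm:BSC-rate_relation}, and Lemma~\ref{lem:connect_prob1AndProb2_1} by Lemma~\ref{lem:connect_prob1AndProb2_1_general}. First I would fix an arbitrary feasible scheme $PS=PS((\beta_1,t_1),\ldots,(\beta_\ell,t_\ell))\in\cP_T$, i.e.\ one with $D_{PS}(C')\le D$, and rewrite its capacity through the normalized capacities of its components: $\cC_{PS}(C')=\sum_i\beta_i\cC_{t_i}(C')=\sum_i\beta_i D_{t_i}(C')\,\bC_{PS_{t_i}}(C')$, using the definition of the normalized capacity. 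The summands with $t_i=0$ vanish here, since $D_0(\bfP)=0$ by Lemma~\ref{lem:difErrorProb_D_R}, so they cause no trouble.

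The decisive step is to bound each factor $\bC_{PS_{t_i}}(C')$ by $\bC_{PS_T}(C')$. Since $t_i\le T$, iterating Theorem~\ref{thm:BSC-rate_relation} gives $\bC_{PS_{t_i}}(C')\le\bC_{PS_{t_i+1}}(C')\le\cdots\le\bC_{PS_T}(C')$ for every $t_i\ge1$, while for $t_i=0$ the corresponding summand is already zero and can be discarded. Substituting then yields $\cC_{PS}(C')\le\bC_{PS_T}(C')\sum_i\beta_i D_{t_i}(C')=\bC_{PS_T}(C')\,D_{PS}(C')$.

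To finish I would observe that the average delay $D_t(\bfP)=1+Y_t$ is nondecreasing in $t$, because the $Y_t$ are partial sums of the nonnegative quantities $Q_i$; hence $D_{t_i}(C')\le D_T(C')$ for all $i$ and therefore $D_{PS}(C')\le D_T(C')$. Combining this with feasibility $D_{PS}(C')\le D$ upgrades the bound to $D_{PS}(C')\le\min\{D,D_T(C')\}$, and since $\bC_{PS_T}(C')\ge0$ I obtain $\cC_{PS}(C')\le\bC_{PS_T}(C')\cdot\min\{D,D_T(C')\}$. Applying Lemma~\ref{lem:connect_prob1AndProb2_1_general} to $PS_T$, for which $D_{PS_T}(C')=D_T(C')$, identifies the right-hand side as $\cC_{PS_T(C',D)}(C')$. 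Finally I would verify achievability: $PS_T(C',D)\in\cP_T$ and meets the constraint $D$, so the upper bound is attained, giving $F_2(C',D,T)=\cC_{PS_T(C',D)}(C')$.

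I do not expect a genuine obstacle here, as all the analytic content is carried by Theorem~\ref{thm:BSC-rate_relation}. The only points requiring care are the bookkeeping for the $t_i=0$ components and the monotonicity of the delay $1+Y_t$ in $t$, which is exactly what lets the plain feasibility constraint $D_{PS}(C')\le D$ be replaced by $D_{PS}(C')\le\min\{D,D_T(C')\}$ so that the connecting lemma applies cleanly.
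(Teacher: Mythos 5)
Your proposal is correct and follows essentially the same route as the paper, which itself only remarks that the proof is analogous to that of Corollary~\ref{cor:solutionProb1} with Lemma~\ref{lem:connect_prob1AndProb2_1_general} in place of Lemma~\ref{lem:connect_prob1AndProb2_1} and Theorem~\ref{thm:BSC-rate_relation} supplying the monotonicity of the normalized capacity. Your explicit handling of the $t_i=0$ terms and of the monotonicity of $D_t(\bfP)=1+Y_t$ (needed to pass from $D_{PS}(C')\le D$ to $D_{PS}(C')\le\min\{D,D_T(C')\}$) fills in exactly the bookkeeping the paper leaves implicit.
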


In the rest of this section, we solve a special case for $BEC(\bfP)$.
\begin{theorem}\label{thm:BEC-rate_relation}
For a probabilities sequence $\bfP=(p_1, p_2, \ldots)$, for all $t\ge 1$, 
$${\bC}_{PS_{t}}\left(BEC(\bfP)\right)\le {\bC}_{PS_{t+1}}\left(BEC(\bfP)\right),$$
if and only if $$p_{t+1}\le \frac{Y_{t+1}}{Y_t + 1}.$$ 
\end{theorem}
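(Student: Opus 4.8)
The plan is to reduce the claimed equivalence to elementary algebra, since for the BEC both the capacity and the delay of $PS_t$ have closed forms. By Lemma~\ref{lem:difErrorProb_D_R}, $\cC_t(BEC(\bfP)) = 1 - Q_t$ and $D_t(BEC(\bfP)) = 1 + Y_t$, so the normalized capacity is
$${\bC}_{PS_t}(BEC(\bfP)) = \frac{1-Q_t}{1+Y_t}.$$
Since $1 + Y_t > 0$ and $1 + Y_{t+1} > 0$, the inequality ${\bC}_{PS_t}(BEC(\bfP)) \le {\bC}_{PS_{t+1}}(BEC(\bfP))$ is equivalent, after cross-multiplying, to
$$(1-Q_t)(1+Y_{t+1}) \le (1-Q_{t+1})(1+Y_t).$$

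First I would substitute the two recursions that tie consecutive quantities together, namely $Y_{t+1} = Y_t + Q_t$ (which follows directly from $Y_{t+1} = \sum_{i=1}^{t} Q_i$) and $Q_{t+1} = p_{t+1} Q_t$. Expanding the left-hand side then yields $(1+Y_t) - Q_t Y_t - Q_t^2$, while the right-hand side becomes $(1+Y_t) - p_{t+1} Q_t (1+Y_t)$. The common term $1+Y_t$ cancels, leaving $-Q_t Y_t - Q_t^2 \le -p_{t+1} Q_t (1+Y_t)$.

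Next, assuming $Q_t > 0$, I would divide through by $-Q_t$, reversing the inequality, to obtain $Y_t + Q_t \ge p_{t+1}(1+Y_t)$, that is
$$p_{t+1} \le \frac{Y_t + Q_t}{1+Y_t} = \frac{Y_{t+1}}{Y_t+1},$$
which is exactly the stated condition. Because every step in this chain is reversible, the equivalence holds in both directions, establishing the "if and only if".

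The only genuine subtlety — rather than a real obstacle — is the degenerate case $Q_t = 0$ (some $p_j = 0$ for $j \le t$), where the division by $-Q_t$ is illegal. In that case I would check the two sides directly: there $Y_{t+1} = Y_t$ and $Q_{t+1} = 0$, so the two normalized capacities coincide and the inequality holds trivially. I would therefore present the theorem under the natural nondegeneracy assumption $Q_t > 0$, under which the equivalences above are clean. All remaining steps are routine polynomial expansions with no analytic input beyond the monotonicity of cross-multiplication by positive denominators, so no step should present real difficulty.
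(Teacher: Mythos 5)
Your proof is correct and follows essentially the same route as the paper's: express both normalized capacities via Lemma~\ref{lem:difErrorProb_D_R}, cross-multiply by the positive denominators, simplify using $Y_{t+1}=Y_t+Q_t$ and $Q_{t+1}=p_{t+1}Q_t$, and divide by $Q_t$ to reach the stated condition. Your explicit handling of the degenerate case $Q_t=0$ (where the division is illegal and the equivalence can in fact fail) is a small point of extra care that the paper's proof passes over silently.
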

\begin{proof}
According to Lemma~\ref{lem:difErrorProb_D_R}, the following relation holds
$$\frac{1-Q_t}{1+Y_{t}}\hspace{-0.3ex}=\hspace{-0.3ex}{\bC}_{PS_{t}}\left(BEC(\bfP)\right) \hspace{-0.3ex}\le\hspace{-0.3ex} {\bC}_{PS_{t+1}}\left(BEC(\bfP)\right)\hspace{-0.3ex}=\hspace{-0.3ex}\frac{1-Q_{t+1}}{1+Y_{t+1}},$$
if and only if
$$ (1-Q_t)\cdot(1+Y_{t+1})\le (1-Q_{t+1}) \cdot (1+Y_{t}).$$
This holds if and only if
$$ -Q_t-Q_tY_{t+1}+Y_{t+1} \le -Q_{t+1}-Q_{t+1}Y_{t}+Y_{t}$$
or
\vspace*{-0.2cm}
$$ Y_{t+1} -Q_t - Y_{t}  -Q_tY_{t+1} \le -Q_{t+1}-Q_{t+1}Y_{t},$$
which translates to
$$ -Q_tY_{t+1} + Q_{t+1} + Q_{t+1}Y_{t} \le 0,$$
and
$$ Q_{t}p_{t+1}(1 + Y_{t}) \le  Q_tY_{t+1},$$
and finally
$$ p_{t+1}\le \frac{Y_{t+1}}{(1 + Y_{t})}.\vspace*{-0.5cm}$$
\end{proof}

\begin{theorem}\label{th:BEC_P}
Let $\bfP=(p_1, p_2, \ldots)$ be a probabilities sequence such that $1 \ge 
p_1\ge p_2 \ge p_3 \cdots$. Then, for all $t\ge 1$,
$${\bC}_{PS_{t}}\left(BEC(\bfP)\right)\le {\bC}_{PS_{t+1}}\left(BEC(\bfP)\right).$$
\end{theorem}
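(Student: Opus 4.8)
The plan is to invoke Theorem~\ref{thm:BEC-rate_relation}, which already reduces the desired monotonicity ${\bC}_{PS_{t}}(BEC(\bfP)) \le {\bC}_{PS_{t+1}}(BEC(\bfP))$ to the single inequality
$$p_{t+1}\le \frac{Y_{t+1}}{1 + Y_{t}}.$$
So the whole task becomes: given the monotonicity hypothesis $1 \ge p_1 \ge p_2 \ge \cdots$, verify this bound for every $t\ge 1$. Recall $Y_t = \sum_{i=1}^{t-1}Q_i$ with $Q_i = \prod_{j=1}^{i}p_j$, so that $Y_{t+1} = Y_t + Q_t$. Substituting this, the target inequality $p_{t+1}(1+Y_t)\le Y_{t+1} = Y_t + Q_t$ rearranges to $p_{t+1}(1 + Y_t) - Y_t \le Q_t$, i.e.
$$p_{t+1} + (p_{t+1}-1)Y_t \le Q_t.$$

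First I would handle the base case $t=1$ directly: there $Y_1 = 0$, so the claim is simply $p_2 \le Q_1 = p_1$, which is immediate from $p_1\ge p_2$. For general $t$ I would aim to show the stronger, cleaner statement $p_{t+1}\le Q_t$, and then argue this suffices. Indeed, if $p_{t+1}\le Q_t$, then since $0\le p_{t+1}\le 1$ we have $(p_{t+1}-1)Y_t \le 0$ (because $Y_t\ge 0$), so $p_{t+1} + (p_{t+1}-1)Y_t \le p_{t+1} \le Q_t$, giving exactly the rearranged inequality above. Thus it is enough to prove $p_{t+1}\le Q_t$ for all $t\ge 1$.

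The key step is then proving $p_{t+1}\le Q_t = \prod_{j=1}^{t}p_j$. Using the decreasing hypothesis, $p_{t+1}\le p_j$ for every $j\le t$, and in particular $p_{t+1}\le p_t$. I would combine this with the fact that each factor $p_j\ge p_{t+1}$ together with $Q_{t-1}\ge$ the relevant tail behaviour; more carefully, I would write $Q_t = p_t\cdot Q_{t-1}$ and try induction. However, there is a subtlety worth flagging as the main obstacle: the product $Q_t$ can be very small (e.g. if all $p_j$ are near $1/2$, then $Q_t\approx 2^{-t}$), so the naive bound $p_{t+1}\le p_t$ does \emph{not} by itself give $p_{t+1}\le Q_t$. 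This means the stronger claim $p_{t+1}\le Q_t$ is in fact \emph{false} in general, and the clean reduction above must be abandoned.

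Consequently the real work is to verify $p_{t+1}(1+Y_t)\le Y_{t+1}$ directly without over-strengthening. I would exploit the telescoping structure of $Y_t$ and argue by induction on $t$, using the inductive hypothesis $p_t(1+Y_{t-1})\le Y_t$ (i.e. the case already established for the previous index). The plan is to write $1 + Y_t = 1 + Y_{t-1} + Q_{t-1}$ and bound $p_{t+1}(1+Y_t)\le p_t(1+Y_t) = p_t(1+Y_{t-1}) + p_t Q_{t-1} \le Y_t + Q_t = Y_{t+1}$, where the first step uses $p_{t+1}\le p_t$, the middle equality uses $p_t Q_{t-1}=Q_t$, and the final inequality uses the inductive hypothesis $p_t(1+Y_{t-1})\le Y_t$. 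This chain closes the induction cleanly, and the decreasing hypothesis enters exactly once, at the replacement of $p_{t+1}$ by $p_t$. I expect the main obstacle to be precisely recognizing that one must keep $p_{t+1}(1+Y_t)\le Y_{t+1}$ intact and feed it through induction, rather than passing to a simpler but false intermediate inequality.
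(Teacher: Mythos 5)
Your proposal is correct. It reduces the statement to the inequality $p_{t+1}(1+Y_t)\le Y_{t+1}$ via Theorem~\ref{thm:BEC-rate_relation}, exactly as the paper does, and your inductive chain $p_{t+1}(1+Y_t)\le p_t(1+Y_{t-1})+p_tQ_{t-1}\le Y_t+Q_t=Y_{t+1}$ is valid (the base case $p_2\le p_1=Q_1$ is right, and the identities $1+Y_t=1+Y_{t-1}+Q_{t-1}$ and $p_tQ_{t-1}=Q_t$ hold). Where you differ from the paper is only in the final verification: the paper avoids induction and instead bounds the sum termwise, writing $p_{t+1}\sum_{i=1}^{t-1}Q_i\le p_t\sum_{i=1}^{t-1}Q_i\le\sum_{i=2}^{t}Q_i$ (using $p_tQ_i\le p_{i+1}Q_i=Q_{i+1}$ for each $i\le t-1$, which needs the full monotone chain, not just consecutive comparisons) together with $p_{t+1}\le p_1=Q_1$. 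Both arguments are equally elementary; your induction has the mild advantage of using only the consecutive inequality $p_{t+1}\le p_t$ at each step, while the paper's termwise bound is a one-shot computation with no induction bookkeeping. Your digression about the false strengthening $p_{t+1}\le Q_t$ is correctly diagnosed (e.g.\ all $p_j=1/2$ gives $Q_t=2^{-t}<p_{t+1}$) but should be cut from a final write-up.
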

\begin{proof}
According to Theorem~\ref{thm:BEC-rate_relation}
$${\bC}_{PS_{t}}\left(BEC(\bfP)\right)\le {\bC}_{PS_{t+1}}\left(BEC(\bfP)\right)$$
if and only if
\vspace*{-0.2cm}
$$ p_{t+1} \le \frac{Y_{t+1}}{(1 + Y_{t})}
\vspace*{-0.2cm}$$
or	
$$ p_{t+1}\left(1 + Y_{t}\right) \le Y_{t+1}$$
and by the definition of $Y_t$
$$ p_{t+1} + p_{t+1}\left(\sum_{i=1}^{t-1}Q_i\right) \le \sum_{i=1}^{t}Q_i
\vspace*{-0.2cm}$$
and thus	
\vspace*{-0.2cm}
$$ p_{t+1} + p_{t+1}\left(\sum_{i=1}^{t-1}Q_i\right)- \sum_{i=1}^{t}Q_i \le 0.$$
By $Q_i$ definition and since $p_1\ge p_2 \ge p_3 \cdots$ we have 
	$$p_{t+1}\left(\sum_{i=1}^{t-1}Q_i\right) \le p_t 
	\left(\sum_{i=1}^{t-1}Q_i\right) 
	\leq \sum_{i=2}^{t}Q_i,$$
	and by $p_{t+1}\le p_1=Q_1$
	we conclude that
	$$ p_{t+1} + p_{t+1}\left(\sum_{i=1}^{t-1}Q_i\right)- \sum_{i=1}^{t}Q_i \le 
	0,$$
	\vspace*{-0.1cm}
	as required.
\end{proof}

By Theorem~\ref{th:BEC_P} we can finally conclude with the following corollary.
\begin{corollary}\label{cor:solutionProb3-BEC}
For a channel sequence $C'=BEC(\bfP)$ where
$\bfP=(p_1, p_2, \ldots)$ such that
$1 \ge p_1\ge p_2 \ge p_3 \cdots $,
the solution for Problem~\ref{prob:mainVectorProb} for $C'$ is
$F_2( C', D , T) =\cC_{PS_{T}(C',D)}(C') $ and it is obtained by the programming scheme $PS_{T}(C',D)$.
\end{corollary}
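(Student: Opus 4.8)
The plan is to reduce the corollary to the monotonicity result already established in Theorem~\ref{th:BEC_P}, exactly mirroring how Corollary~\ref{cor:solutionProb1} was deduced from Theorem~\ref{thm:errorProbP_NORM-R}. First I would fix an arbitrary programming scheme $PS = PS\left((\beta_1, t_1),\ldots,(\beta_\ell,t_\ell)\right)\in\cP_T$ that meets the delay constraint, i.e. $D_{PS}(C')=\sum_{i=1}^\ell \beta_i D_{t_i}(C')\le\min\{D,D_T(C')\}$. The goal is to bound $\cC_{PS}(C')$ from above by $\cC_{PS_T(C',D)}(C')$, which then certifies $PS_T(C',D)$ as optimal.

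The central observation is that because $\bfP$ is nonincreasing, Theorem~\ref{th:BEC_P} gives ${\bC}_{PS_1}(C')\le {\bC}_{PS_2}(C')\le\cdots$, so the normalized capacity is maximized by the largest admissible index; hence ${\bC}_{PS_{t_i}}(C')\le{\bC}_{PS_T}(C')$ for every $t_i\le T$. I would then run the same chain of inequalities as in Corollary~\ref{cor:solutionProb1}:
\begin{align*}
\cC_{PS}(C')
&=\sum_{i=1}^\ell \beta_i\,\cC_{t_i}(C')
=\sum_{i=1}^\ell \beta_i\, D_{t_i}(C')\,{\bC}_{PS_{t_i}}(C')\\
&\le {\bC}_{PS_T}(C')\sum_{i=1}^\ell \beta_i\, D_{t_i}(C')
\le {\bC}_{PS_T}(C')\cdot\min\{D,D_T(C')\}\\
&=\cC_{PS_T(C',D)}(C'),
\end{align*}
where the first equality uses the definition of the normalized capacity, the first inequality uses Theorem~\ref{th:BEC_P}, the second inequality uses the delay constraint, and the final equality invokes Lemma~\ref{lem:connect_prob1AndProb2_1_general} with the identification $\min\{D,D_T(C')\}=\min\{D,D_{PS_T}(C')\}$ (valid since $D_{PS_T}(C')=D_T(C')$).

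One subtlety to handle carefully is the boundary behavior of the indices. Theorem~\ref{th:BEC_P} is stated for $t\ge 1$, so the chain of monotonicity covers $PS_1,\ldots,PS_T$; I would note that $PS_0$ contributes $\cC_0(C')=0$ and $D_0(C')=0$, so terms with $t_i=0$ drop out of both sums and do not disturb the bound. I would also verify that the definition of $PS_T(C',D)$ from Equation~(\ref{eq:PS(C,D)}) yields average delay exactly $\min\{D,D_T(C')\}$, so that Lemma~\ref{lem:connect_prob1AndProb2_1_general} applies cleanly and the upper bound is actually attained by $PS_T(C',D)$. I expect the only real obstacle to be bookkeeping rather than mathematical depth: making sure the normalized-capacity ordering is applied with the correct largest index $T$ and that the degenerate $t_i=0$ and delay-saturation cases are phrased consistently, since the genuinely hard step—the monotonicity of ${\bC}_{PS_t}(BEC(\bfP))$ under a nonincreasing $\bfP$—has already been discharged in Theorem~\ref{th:BEC_P}.
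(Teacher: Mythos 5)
Your proposal is correct and follows essentially the same route as the paper: the paper proves this corollary by invoking Theorem~\ref{th:BEC_P} for the monotonicity ${\bC}_{PS_{t}}(BEC(\bfP))\le{\bC}_{PS_{t+1}}(BEC(\bfP))$ and then repeating verbatim the chain of inequalities from the proof of Corollary~\ref{cor:solutionProb1} together with Lemma~\ref{lem:connect_prob1AndProb2_1_general}, which is exactly what you do. Your extra bookkeeping for the $t_i=0$ terms and the delay-saturation case is consistent with the paper's definitions and does not change the argument.
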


\section{Combined Programming Schemes for the BSC and the BEC}\label{sec:psAsBSCandBEC}
In this section, we study programming schemes for the BSC, in which on the last programming attempt it is possible to either try to reprogram the failed cell again with its value or instead program it with a special question mark to indicate a programming failure. This model is motivated by several applications. For example, when synthesizing DNA strands, if the attachment of the next base to the strand fails on multiple attempts, it is possible to attach instead a different molecule to indicate this base attachment failure~\cite{KC14}. In flash memories we assume that if some cell cannot reach its correct value, then it will be possible to program it to a different level (for example a high voltage level that is usually not used) in order to indicate a programming failure of the cell.

We denote by $PS_{q,t}$ the programming scheme in which on the $t$-th programming attempt, which is the last one, the cell is programed without verification with probability $q$, and with probability $1-q$ it is programmed with the question mark symbol $'?'$.

The average delay of programming a cell with $PS_{q,t}$ over the $BSC(p)$ does not depend on $q$, and hence equals to $D_t(p)$.
However the capacity is clearly influenced by the parameter~$q$.

Let $p$ be the programming error probability. Then, the probability that a cell will be erroneous after $t-1$ programming attempts is $p^{t-1}$.
Therefore, programming with $PS_{q,t}$ over $BSC(p)$ can be represented by a channel 
with the following transitions probabilities
$$
p(y|x)=
\begin{cases}
p^{t-1}(1-q)&\mbox{if } y=?\\ 
p^{t-1}q(1-p)+(1-p^{t-1})& \mbox{if } x=y\\ 
p^{t-1}qp& \mbox{otherwise} ,\\ 
\end{cases}
$$
where $x$, $y$ is the input, output bit of the channel, respectively. 
Denote $b=p^{t-1}$. The capacity of this channel is \cite[Porblem~7.13]{CT91}
\begin{eqnarray*}
\cC_{q,t}(BSC(p))
&\hspace*{-1.5ex}=
&\hspace*{-1.5ex}(1-b+bq)\left(1-h\left(\frac{bpq}{1-b+bq}\right)\right) \\
&\hspace*{-1.5ex}=&\hspace*{-1.5ex}1-b+bq\\
&\hspace*{-1.5ex}&\hspace*{-1.5ex}-(1-b+bq)\log(1-b+bq)\\
&\hspace*{-1.5ex}&\hspace*{-1.5ex}+(1-b+bq-bqp)\log(1-b+bq-bqp)\\
&\hspace*{-1.5ex}&\hspace*{-1.5ex}+bpq\log(bpq).
\end{eqnarray*}

Note that,
$\cC_{0,t}(BSC(p))=1-p^{t-1}$,
and
$\cC_{1,t}(BSC(p))=1-h(p^t)$.
For example, for $t=1$,
$$
\cC_{q,1}(BSC(p)) \hspace{-0.1cm}= \hspace{-0.1cm}q-q\log(q) + q(1-p)\log(q(1-p))
+qp\log(qp).
$$

Let $PS=PS\left((\beta_1, t_1),(\beta_2, t_2),\ldots, (\beta_{\ell},t_{\ell})\right)\in \cP_T$ be a programming scheme, and $\bfq=(q_1,q_2,\ldots,q_{\ell})$
where $0\le q_i \le 1$, for all $1\leq i\leq \ell$.
Then, we define 
$$\cC_{PS,\bfq}(BSC(p))=\sum_{i=1}^{\ell} 
\beta_i\cdot \cC_{q_i,t_i}(BSC(p)).$$
That is, $\cC_{PS,\bfq}(BSC(p))$ is the capacity of $BSC(p)$ when using the programming scheme $PS$ with the parameter $\bfq$.
Similarly, we define the average delay of the programming scheme $PS$ for $BSC(p)$ using the parameter $\bfq$ as
$$D_{PS,\bfq}(BSC(p))=
\sum_{i=1}^{\ell} 
\beta_i\cdot \cD_{q_i,t_i}(BSC(p)).$$
Note that $D_{PS,\bfq}(BSC(p))=D_{PS}(BSC(p)).$

For this model, Problem~\ref{prob:main} will be formulated as follows.
\begin{customProblem}{\ref{prob:main} - Combined channel} \label{prob:main_combined}
	Given a channel $BSC(p)$, an average delay $D$, and a maximum delay $T$, find a programming scheme, $PS\in \cP_T$, and $\bfq$ which maximize the capacity $\cC_{PS,\bfq}(BSC(p))$, under the constraint that $D_{PS,\bfq}(BSC(p))\le D$. 
	In particular, given $BSC(p)$, $D$, and $T$, find the value of $$F_3(BSC(p),D,T) = \max_{D_{PS,\bfq}(BSC(p))\le D}\{\cC_{PS,\bfq}(BSC(p))\}.$$
\end{customProblem}

For this generalization of the model, the programming scheme $PS(C,D)$ and the normalized capacity are defined in a similar way as in the original definitions in Equations~(\ref{eq:PS(C,D)}) and~(\ref{eq:normalizedCapacity}), respectively.

Given $p,t$ we define $$\cC'_{t}(BSC(p))=\max_{q\in[0,1]}\{\cC_{q,t}(BSC(p))\},$$
and the normalized capacity
${\bC'}_{t}(BSC(p))=\frac{\cC'_{t}(BSC(p))}{D_{t}(p)}$.

In the rest of this section we prove that the best scheme is  $PS_{1,T}(C,D)$ or $PS_{0,T}(C,D)$, i.e., the standard $PS_T$
or the new $PS_T$ in which in the last attempt all the erroneous cells are programmed with a question mark. 

\begin{lemma}\label{lem:combinedMaxq}
	Given $p$ and $t$,  $$\cC'_{t}(BSC(p))=\max\{\cC_{0,t}(BSC(p)),\cC_{1,t}(BSC(p))\}.$$
\end{lemma}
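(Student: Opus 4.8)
The plan is to show that $g(q)\deff \cC_{q,t}(BSC(p))$, regarded as a function of the single variable $q$ on $[0,1]$ for fixed $p$ and $t$, is \emph{convex}. Since a convex function on a closed interval attains its maximum at an endpoint, this immediately gives $\cC'_{t}(BSC(p)) = \max_{q\in[0,1]}g(q) = \max\{g(0),g(1)\} = \max\{\cC_{0,t}(BSC(p)),\cC_{1,t}(BSC(p))\}$, which is exactly the claim.

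To set up the computation I would write $b=p^{t-1}$ and introduce the three affine functions $u = 1-b+bq$, $v = 1-b+bq(1-p)$, and $w = bpq$, so that the expanded capacity formula reads $g(q) = u - u\log u + v\log v + w\log w$. The two structural facts that drive everything are the identity $u = v+w$ and, for the derivatives, $u' = b$, $v' = b(1-p)$, $w' = bp$, so that $u' = v'+w'$.

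Differentiating $g$, each term contributes via $\frac{d}{dq}(x\log x) = x'\log x + x'/\ln 2$, and the constant pieces cancel precisely because $-u'+v'+w' = 0$; this leaves $g'(q) = b\bigl(1 - \log u + (1-p)\log v + p\log w\bigr)$. Differentiating once more yields
$$g''(q) = \frac{b^2}{\ln 2}\left(\frac{(1-p)^2}{v} + \frac{p^2}{w} - \frac{1}{u}\right).$$

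The crux of the argument — and the step I expect to be the main, though short, obstacle — is verifying that the bracket is nonnegative on $(0,1)$. Because $u = v+w$ with $v,w>0$, this is exactly the Cauchy--Schwarz inequality in Engel (Titu) form: $\frac{(1-p)^2}{v} + \frac{p^2}{w} \ge \frac{\bigl((1-p)+p\bigr)^2}{v+w} = \frac{1}{u}$. Hence $g''(q)\ge 0$, so $g$ is convex on $(0,1)$; since $g$ is also continuous on $[0,1]$ (using $x\log x\to 0$ as $x\to 0^+$ to handle the endpoints $q=0$ and $q=1$, where $w$ or $u$ may vanish), its maximum over $[0,1]$ is attained at $q=0$ or $q=1$. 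This establishes the lemma.
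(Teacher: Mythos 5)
Your proof is correct and takes essentially the same route as the paper's: both compute the same second derivative and reduce to the inequality $\frac{(1-p)^2}{v}+\frac{p^2}{w}\ge\frac{1}{v+w}$, which the paper verifies by cross-multiplying to the perfect square $\left((1-p)w-pv\right)^2\ge 0$ --- precisely the Engel-form Cauchy--Schwarz step you invoke. Your convexity framing even absorbs the $t=1$ case (where $g''\equiv 0$, so $g$ is affine), which the paper handles separately, leaving only the trivial $p=0$ edge case implicit.
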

\begin{proof}
	If $p=0$ then there is not errors, and the maximum capacity is obtained for all $q$.
	Given $0<p\le 0.5$,	if $t=1$ then $\cC_{q,t}(BSC(p))=q(1-h(p))$ and the maximum is obtained for $q=1$.
	Otherwise, given $p,t$, such that $0<p\le 0.5$ and $1<t$, we prove that the function $\cC_{q,t}(BSC(p))$ has no local maximum in the range of $0< q<1$. 
	We define $f_{t,p}(q)=\cC_{q,t}(BSC(p))$ as a function of $q$,
	and prove that $f_{t,p}(q)$ has no local maximum in the range of $0< q<1$ by showing that the second derivation of $f_{t,p}(q)=\cC_{q,t}(BSC(p))$ is positive in that range.

	The first derivation is
	\begin{align*}
	\frac{\partial f_{t,p}(q)}{\partial q}= &
	b(1-p)\log(1-b+bq-bqp)\\
	& + bp\log(bpq) - b\log(1-b+bq	)+ b,
	\end{align*}
	and then the second derivation is
	\begin{align*}
	\frac{\partial f_{t,p}(q)}{\partial q^2}= &
		\frac{b^2(1-p)^2}{(1-b+bq-bpq)\ln 2}\\
	& +\frac{(bp)^2}{bpq \ln 2}\\
	& -\frac{b^2}{(1-b+bq)\ln 2}.
	\end{align*}
	To show that $\frac{\partial f_{t,p}(q)}{\partial q^2}>0$, is sufficient to prove that 
	 $$	\frac{(1-p)^2}{1-b+bq-bpq}+
	 \frac{p^2}{bpq}-
	 \frac{1}{1-b+bq}>0.$$
	 We denote $x_1=(1-b+bq-bpq)$ and $x_2=bpq$ 
	 Thus, we want to prove that  
	 $$\frac{(1-p)^2}{x_1}+
	 \frac{p^2}{x_2}-
	 \frac{1}{x_1+x_2}>0.$$
	Note that $x_1=1-b(1-q(1-p))>0$ and $x_2=bpq>0$ since $0<b,p\le 0.5$ and $0<q<1$.
	Thus, we can prove that
	 $$(1-p)^2x_2(x_1+x_2)+p^2x_1(x_1+x_2)-x_1x_2>0,$$
	 which hold if
	  $$((1-p)x_2-px_1)^2>0.$$
	  But the last equation holds since 
	  $(1-p)x_2= px_1$ implies 
	  $0=p(1-b)$ which is impossible since $0<p\le 1/2$, $1<t$, and $b=p^{t-1}$.
\end{proof}
The last lemma proved that for all $p,t$, the capacity $\cC_{q,t}(BSC(p))$ is achieved for $q=0$ or
for $q=1$, by comparing between $\cC_{0,t}(BSC(p))=1-p^{t-1}$
which obtained for $PS_{0,t}\ (q=0)$
and $\cC_{1,t}(BSC(p))=1-h(p^t)$ which attained for $PS_{1,t}\ (q=1)$.
This result can be intuitively explained as if $q=0$ then the last programming is just providing a complete verification for all the successful cells by substituting a question mark in all the erroneous cells.
Thus, according to some threshold ($p,t$), 
we can either provide a complete verification
for the already programmed cells ($q=0$) 
or try to reprogram again all the failed cells ($q=1$). 

\begin{theorem}\label{thm:errorProbP_NORM-R-Combined}
	For $t\ge 0$,
	$${\bC'}_{t}(BSC(p)) \le {\bC'}_{t+1}(BSC(p))$$
\end{theorem}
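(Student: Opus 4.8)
The plan is to reduce the optimization over $q$ to the two extreme points $q=0$ and $q=1$, and then to invoke the elementary fact that the pointwise maximum of two nondecreasing sequences is nondecreasing. First I would apply Lemma~\ref{lem:combinedMaxq} to write, for $t\ge 1$,
$$\cC'_{t}(BSC(p))=\max\{\cC_{0,t}(BSC(p)),\cC_{1,t}(BSC(p))\}=\max\{1-p^{t-1},\,1-h(p^t)\}.$$
Since $D_t(p)=\frac{1-p^t}{1-p}>0$ for every $t\ge 1$, dividing by the positive delay preserves the maximum, so that
$${\bC'}_{t}(BSC(p))=\max\{a_t,b_t\},\qquad a_t\deff\frac{1-p^{t-1}}{D_t(p)},\quad b_t\deff\frac{1-h(p^t)}{D_t(p)}.$$

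The second step is to show that each of the two sequences is nondecreasing in $t$. For $b_t$ this is free: $b_t$ is precisely the normalized capacity ${\bC}_{PS_t}(BSC(p))$ of the standard scheme, so $b_t\le b_{t+1}$ is exactly Theorem~\ref{thm:errorProbP_NORM-R}(1). For $a_t=(1-p)\frac{1-p^{t-1}}{1-p^t}$ I would cross-multiply (all denominators are positive) to see that $a_t\le a_{t+1}$ is equivalent to $(1-p^t)^2\ge(1-p^{t-1})(1-p^{t+1})$; the difference of the two sides is $p^{t-1}(1-p)^2\ge 0$, so the inequality holds. Equivalently, the sequence $1-p^t$ is log-concave in $t$.

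Finally, since $(a_t)$ and $(b_t)$ are both nondecreasing, so is their pointwise maximum: for $t\ge 1$,
$${\bC'}_{t}(BSC(p))=\max\{a_t,b_t\}\le\max\{a_{t+1},b_{t+1}\}={\bC'}_{t+1}(BSC(p)),$$
because $a_t\le a_{t+1}\le\max\{a_{t+1},b_{t+1}\}$ and likewise $b_t\le\max\{a_{t+1},b_{t+1}\}$. The boundary index $t=0$ is treated separately: $D_0(p)=0$ forces ${\bC'}_{0}(BSC(p))=\cC'_{0}(BSC(p))=0$ by the definition of the normalized capacity together with the convention $\cC_0=0$, and this is clearly at most ${\bC'}_{1}(BSC(p))\ge 0$.

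I do not anticipate a genuine obstacle here; the one thing to get right is the packaging. The tempting direct route---comparing ${\bC'}_{t}$ with ${\bC'}_{t+1}$ and splitting into cases according to which of $a$ or $b$ attains the maximum at each index---generates four cases with awkward entropy algebra, whereas proving monotonicity of $a_t$ and $b_t$ separately and then appealing to the maximum-of-increasing-functions principle sidesteps all case analysis and leaves only the one-line inequality $p^{t-1}(1-p)^2\ge 0$ to verify.
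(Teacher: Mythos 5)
Your proposal is correct and follows essentially the same route as the paper's proof: reduce to $q\in\{0,1\}$ via Lemma~\ref{lem:combinedMaxq}, use Theorem~\ref{thm:errorProbP_NORM-R} for the $q=1$ branch, and the ratio inequality $\frac{1-p^{t-1}}{1-p^t}\le\frac{1-p^{t}}{1-p^{t+1}}$ for the $q=0$ branch; the paper phrases the final step as a case split on which value of $q$ attains the maximum at index $t$, which is the same argument as your ``maximum of two nondecreasing sequences'' packaging. Your explicit verification that the ratio inequality reduces to $p^{t-1}(1-p)^2\ge 0$ (so it needs no restriction to $p\le 1/2$) and your separate treatment of $t=0$ are minor tidy-ups over the paper's version.
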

\begin{proof}
	By Lemma~\ref{lem:combinedMaxq}
	$\cC'_{t}(BSC(p))=\max\{\cC_{0,t}(BSC(p)),\cC_{1,t}(BSC(p))\}$.
	If $\cC'_{t}(BSC(p))$ is obtained for $q=1$, 
	then the claim is implied by Theorem~\ref{thm:errorProbP_NORM-R}.
  Otherwise, 	$\bC'_{t}(BSC(p))=\frac{(1-p)(1-p^{t-1})}{1-p^t}$
  and
  $\bC'_{t+1}(BSC(p))\ge \frac{(1-p)(1-p^{t})}{1-p^{t+1}}.$
  Then, the claim is true since $p\le 1/2$ implies
  $\frac{1-p^{t}}{1-p^{t+1}} \ge \frac{1-p^{t-1}}{1-p^t}$.
\end{proof}

By applying the same technique as in the proof of Corollary~\ref{cor:solutionProb1} with using Lemma~\ref{lem:combinedMaxq} and Theorem~\ref{thm:errorProbP_NORM-R-Combined} we solve Problem~\ref{prob:main} for the new model.
\begin{corollary}\label{cor:solutionProb1_combined}
	
	Denote by $D'=\min\{ D_T(p),D\}$ then the solution for Problem~\ref{prob:main_combined} is
	$F_3(BSC(p), D,T) =\frac{D'}{D_T(p)}\cdot \max\{1-h(p^t),1-p^{t-1}\}$
	obtained by $PS_{1,T}(D)$ or $PS_{0,T}(D)$, respectively.	
\end{corollary}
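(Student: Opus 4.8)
The plan is to reproduce the argument of Corollary~\ref{cor:solutionProb1} almost verbatim, with the plain capacity $\cC_{t}$ and its normalized version replaced by the $q$-optimized quantities $\cC'_{t}(BSC(p))$ and ${\bC'}_{t}(BSC(p))$. The two nontrivial inputs are already in hand: Lemma~\ref{lem:combinedMaxq}, which collapses the inner maximization over $q$ onto the endpoints $q\in\{0,1\}$, and Theorem~\ref{thm:errorProbP_NORM-R-Combined}, which says the normalized capacity ${\bC'}_{t}(BSC(p))$ is nondecreasing in $t$. The first thing to record is that the average delay is insensitive to $\bfq$, since $D_{PS,\bfq}(BSC(p))=D_{PS}(BSC(p))$; consequently, for any feasible pair $(PS,\bfq)$ with $PS=PS\left((\beta_1,t_1),\ldots,(\beta_\ell,t_\ell)\right)\in\cP_T$, the constraint is the purely $\bfq$-free statement $\sum_{i=1}^{\ell}\beta_i D_{t_i}(p)\le\min\{D,D_T(p)\}=D'$.

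Given this, I would bound an arbitrary feasible scheme from above by the claimed value of $F_3$ through
\begin{align*}
\cC_{PS,\bfq}(BSC(p)) &= \sum_{i=1}^{\ell}\beta_i\,\cC_{q_i,t_i}(BSC(p)) \\
&\le \sum_{i=1}^{\ell}\beta_i\,\cC'_{t_i}(BSC(p)) \\
&= \sum_{i=1}^{\ell}\beta_i\,D_{t_i}(p)\cdot{\bC'}_{t_i}(BSC(p)) \\
&\le {\bC'}_{T}(BSC(p))\sum_{i=1}^{\ell}\beta_i\,D_{t_i}(p) \\
&\le {\bC'}_{T}(BSC(p))\cdot D'.
\end{align*}
Here the first inequality is the definition of $\cC'_{t_i}$ as the maximum over $q$, the following equality is the definition of the normalized capacity, the fourth line uses Theorem~\ref{thm:errorProbP_NORM-R-Combined} to dominate every term by its value at $t=T$ (note $t_i\le T$ since $PS\in\cP_T$), and the last line is the feasibility condition above. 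Evaluating ${\bC'}_{T}(BSC(p))=\cC'_{T}(BSC(p))/D_T(p)$ with Lemma~\ref{lem:combinedMaxq}, which gives $\cC'_{T}(BSC(p))=\max\{\cC_{0,T}(BSC(p)),\cC_{1,T}(BSC(p))\}=\max\{1-p^{T-1},1-h(p^T)\}$, turns this into $F_3(BSC(p),D,T)\le\frac{D'}{D_T(p)}\max\{1-h(p^T),1-p^{T-1}\}$.

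For the matching lower bound I would exhibit an explicit optimizer: pick $q^*\in\{0,1\}$ attaining the maximum in $\cC'_{T}(BSC(p))$ and take the truncated scheme $PS_{q^*,T}(D)$, whose un-truncated delay is $D_T(p)$. Invoking the combined-model analogue of Lemma~\ref{lem:connect_prob1AndProb2_1} gives $\cC_{PS_{q^*,T}(D)}(BSC(p))=\min\{D,D_T(p)\}\cdot{\bC'}_{T}(BSC(p))=\frac{D'}{D_T(p)}\max\{1-h(p^T),1-p^{T-1}\}$, which meets the upper bound. Hence $PS_{1,T}(D)$ is optimal when $1-h(p^T)\ge 1-p^{T-1}$ and $PS_{0,T}(D)$ otherwise, exactly as asserted.

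The genuine difficulty has already been absorbed into Lemma~\ref{lem:combinedMaxq} and Theorem~\ref{thm:errorProbP_NORM-R-Combined}; what remains is bookkeeping. The one point I would check carefully is that the facts the paper introduces ``similarly'' for this model---the meaning of the truncated scheme $PS(C,D)$, the definition of the normalized capacity, and the transfer of Lemma~\ref{lem:connect_prob1AndProb2_1}---hold verbatim, together with the degenerate conventions at $t_i=0$ (where $D_{t_i}(p)=\cC'_{t_i}(BSC(p))=0$ makes the equality $\cC'_{t_i}(BSC(p))=D_{t_i}(p){\bC'}_{t_i}(BSC(p))$ trivially true) and the edge case $D_T(p)=0$. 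None of these raise a new obstacle.
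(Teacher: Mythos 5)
Your proposal is correct and follows exactly the route the paper indicates: it instantiates the chain of inequalities from the proof of Corollary~\ref{cor:solutionProb1} with the $q$-optimized capacities, using Lemma~\ref{lem:combinedMaxq} to reduce to $q\in\{0,1\}$ and Theorem~\ref{thm:errorProbP_NORM-R-Combined} for the monotonicity step, then matches the bound with the truncated scheme via the analogue of Lemma~\ref{lem:connect_prob1AndProb2_1}. (You also correctly read the $t$ in the statement as $T$, which is a typo in the paper.)
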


For each $p$ we denote by $t_p$ the smallest value of $t$, such that $h(p^t)\ge p^{t-1}$
($t_p$ my be non integer).
Since $h(px)\ge ph(x)$ for $0 \le x,p\le 0.5$, we conclude that for each $t\ge t_p$ there exists
$h(p^t)\ge p^{t-1}$.
Thus, given $p$, there exists $t\ge t_p$ if and only if
$\cC_{0,t}(BSC(p))=1-p^{t-1} \ge 1-h(p^t)=\cC_{1,t}(BSC(p))$.
Let $T$ be the last attempt to program. If $T\ge t_p$ then the encoder in the $T$-th attempt will program question marks in all the failed cells.
Otherwise, in the $T$-th attempt, the failed cells will be programmed (without a verification). In Figure~\ref{fig:tpGraph} the $t_p$ values are presented in a graph, where the horizontal axis is $p$ and the vertical axis is $t$. The graph line is $f(p)=t_p$.

\begin{figure}[h!]
	\includegraphics[scale=0.6]{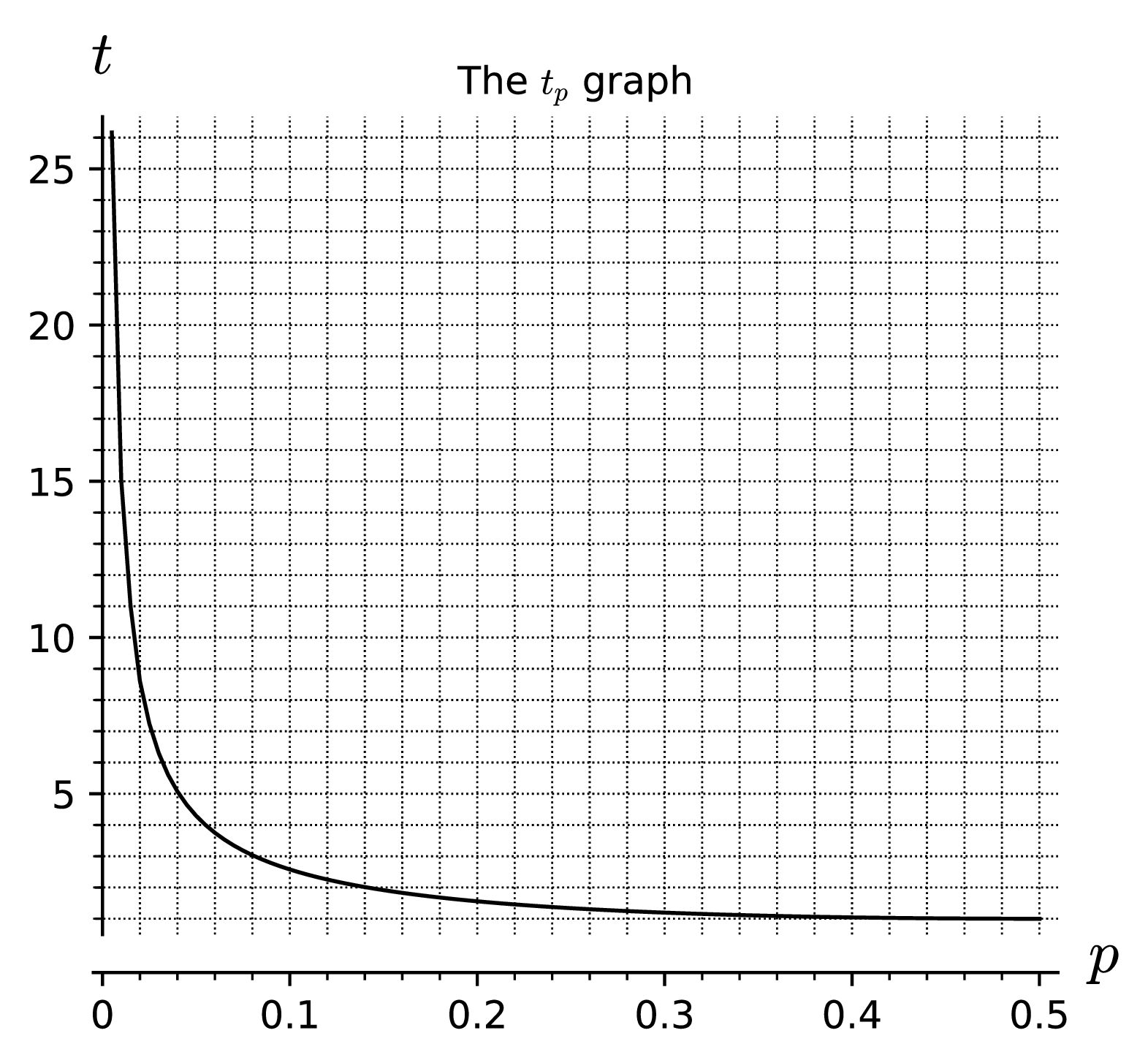}
	\caption{The $t_p$ graph.}
	\label{fig:tpGraph}
\end{figure}

\section{Conclusion}\label{sec:conc}
In this paper we studied a model which described the process of cell programming in memories. We focused on the case where the programming is modeled by the BSC, the BEC and the $Z$ channel, and accordingly, we designed programming schemes that maximize the number of information bits that can be reliably stored in the memory, while the average and maximum numbers of times to program a cell are constrained. While this work established several interesting observations on the programming strategies in memories and transmission schemes, there are still several questions that remain open. In particular, the generalization of this model to multilevel cells, and to a setup in which the cells are programmed in parallel.

\appendices
\section{}\label{appA}
In this part we present the omitted proofs in the paper.

\begin{customCor}{\ref{cor:solutionProb1Z-channel}}
	$F_1(Z(p), D,T) =\min\{D_T(Z(p)),D\}\cdot{\bC}_{PS_{t^*(D)}}\left(Z(p)\right)=
	{\cC}_{PS_{t^*(D)}(Z(p),D)}\left(Z(p)\right)$
	obtained by $PS_{t^*(T)}(Z(p),D)$ with parameter $\alpha^*(p,t^*(T))$.
\end{customCor}
\begin{proof}
	Let $PS = PS\left((\beta_1, t_1),\ldots, (\beta_{\ell},t_{\ell})\right) \in \cP_T$ be
	a programming scheme which meets the constraint $D$ with the parameter ${\balpha}=(\alpha_1,\ldots, \alpha_t)$.
	Thus, we have
	\begin{align*}
	\cC_{PS}(Z(p,\balpha))
	& =\sum_{i=1}^{\ell} 
	\beta_i\cdot \cC_{t_i}(Z(p,\alpha_i))\\
	& \underset{(1)}{=}\sum_{i=1}^{\ell} 
	\beta_i\cdot D_{t_i}(Z(p,\alpha_i)) \cdot {\bC}_{PS_{t_i}}(Z(p,\alpha_i))\\
	& \underset{(2)}{\le}\sum_{i=1}^{\ell} 
	\beta_i\cdot D_{t_i}(Z(p,\alpha_i)) \cdot {\bC}_{PS_{t^*(T)}}(Z(p))\\
	& = {\bC}_{PS_{t^*(T)}}(Z(p)) \sum_{i=1}^{\ell} 
	\beta_i\cdot D_{t_i}(Z(p,\alpha_i))\\
	& \underset{(3)}{\le} {\bC}_{PS_{t^*(T)}}(Z(p)) \cdot D
	\end{align*}
	where $(1)$ is by the definition of the normalized capacity,
	$(2)$ is by $t^*(T)$ and ${\bC}_{PS_{t^*(T)}}(Z(p))$ definitions,
	and
	$(3)$ is since $PS$ meets the average delay constraint $D$ with parameter ${\balpha}$. 
\end{proof}

\end{document}